\def\lipicsf{0}
\tikzset{
          box/.style={rectangle,draw=black,thick, minimum size=1cm},
}
\title{Asynchronous Filling by Myopic Luminous Robots\footnote{The research has been partially supported by the European Union, co-financed by the European Social Fund (EFOP-3.6.3-VEKOP-16-2017-00002).}} %
\titlerunning{Asynchronous Filling by Myopic Luminous Robots}%
\author{Attila Hideg}{Department of Automation and Applied Informatics, Budapest University of Technology and Economics, Budapest, Hungary}{Attila.Hideg@aut.bme.hu}{}{}
\author{Tam\'as Lukovszki}{Faculty of Informatics E\"{o}tv\"{o}s Lor\'and University, Budapest, Hungary}{lukovszki@inf.elte.hu}{}{}
\authorrunning{A. Hideg and T. Lukovszki}%
\keywords{Autonomous robots, Asynchronous model, Dispersing, Filling} %
    \tikzset{
          box/.style={rectangle,draw=black,thick, minimum size=1cm},
    }
	\title{Asynchronous Filling by Myopic Luminous Robots\thanks{The research has been partially supported by the European Union, co-financed by the European Social Fund (EFOP-3.6.3-VEKOP-16-2017-00002).}} %
	\titlerunning{Asynchronous Filling by Myopic Luminous Robots}
	\author{Attila Hideg\inst{1} \and
			Tam\'as Lukovszki\inst{2}}
	\authorrunning{A. Hideg and T. Lukovszki}
	\institute{Department of Automation and Applied Informatics,\\
			   Budapest University of Technology and Economics, Budapest, Hungary
					\email{Attila.Hideg@aut.bme.hu}\and
		   	   Faculty of Informatics, E\"otv\"os Lor\'and University, Budapest, Hungary\\
					\email{lukovszki@inf.elte.hu}}
\newenvironment{breakablealgorithm}
  {%
   \begin{center}
     \refstepcounter{algorithm}%
     \hrule height.8pt depth0pt \kern2pt%
     \renewcommand{\caption}[2][\relax]{%
       {\raggedright\textbf{\ALG@name~\thealgorithm} ##2\par}%
       \ifx\relax##1\relax %
         \addcontentsline{loa}{algorithm}{\protect\numberline{\thealgorithm}##2}%
       \else %
         \addcontentsline{loa}{algorithm}{\protect\numberline{\thealgorithm}##1}%
       \fi
       \kern2pt\hrule\kern2pt
     }
  }{%
     \kern2pt\hrule\relax%
   \end{center}
  }
\begin{document}

\maketitle

\begin{abstract} 
We consider the problem of filling an unknown area represented by an arbitrary connected graph of $n$ vertices by mobile luminous robots. In this problem, the robots enter the graph one-by-one through a specific vertex, called the  Door, and they eventually have to cover all vertices of the graph while avoiding collisions. The robots are anonymous and make decisions driven by the same local rule of behavior. They have limited persistent memory and limited visibility range. We investigate the Filling problem in the asynchronous model.

We assume that the robots know an upper bound $\Delta$ on the maximum degree of the graph before entering.  
We present an algorithm solving the asynchronous Filling problem with robots having $1$ hop visibility range, $O(\log\Delta)$ bits of persistent storage, and $\Delta+4$ colors, including the color when the light is off. We analyze the algorithm in terms of asynchronous rounds, where a round means the smallest time interval in which each robot, which has not yet finished the algorithm, has been activated at least once. We show that this algorithm needs $O(n^2)$ asynchronous rounds. Our analysis provides the first asymptotic upper bound on the running time in terms of asynchronous rounds.
  
Then we show how the number of colors can be reduced to $O(1)$ at the cost of the running time. The algorithm with $1$ hop visibility range, $O(\log \Delta)$ bits of persistent memory, and $O(1)$ colors needs $O(n^2\log \Delta)$ rounds. We show how the running time can be improved by robots with a visibility range of $2$ hops, $O(\log \Delta)$ bits of persistent memory, and $\Delta + 4$ colors (including the color when the light is off). We show that the algorithm needs $O(n)$ asynchronous rounds. Finally, we show how to extend our solution to the $k$-Door case, $k\geq 2$, by using $\Delta + k + 4$ colors, including the color when the light is off. 
\end{abstract} 
 
\section{Introduction}
In swarm robotics, a large number of autonomous mobile robots cooperate to achieve a complex goal. 
The robots of the swarm are simple, cheap, and computationally limited.
They act according to local rules of  behavior.
Robot swarms can achieve high scalability, fault tolerance, and cost-efficiency.

The robots can cooperatively solve different problems, as gathering, flocking, pattern formation, dispersing, filling, coverage, and exploration (e.g. \cite{Albers2000,Amir19,Augustine2018,Barrameda2008,Barrameda2014,Cohen2005,Scheideler19,Hsiang2002,LukovszkiH14}.

The Filling (or, Uniform Dispersal) problem was introduced by Hsiang et al. \cite{Hsiang2002}, where the robots enter an a priori unknown but connected area and have to disperse.
The area is subdivided into pixels, and at the end of the dispersion, each pixel has to be occupied by exactly one robot.

\vspace{1ex}
\noindent\textbf{Model:}
We consider the Filling problem, where the area is represented by a connected graph, which is unknown for the robots.
The robots enter the graph one-by-one through a specific vertex, which is called the \textit{Door} and have to disperse
to cover all vertices of the graph while avoiding collision, i.e. two or more robots can not be at the same vertex.
At the end of the dispersion, each vertex of the graph has to be occupied by exactly one robot.
When the Door vertex becomes empty, a new robot is placed there immediately.
We assume that the robots know an upper bound $\Delta$ on the maximum degree of the graph.

For simplicity, we assume that the degree of the Door vertex is $1$.
Otherwise, we introduce an auxiliary vertex of degree $1$ connected only to the Door, which takes the role of the original Door (this models the two sides of a doorstep).
We assume that, for each vertex $v$, the adjacent vertices are arranged in a fixed cyclic order. This cyclic order is only visible for robots at $v$, and it does not change during the dispersion. 
When a robot $r$ arrives at vertex $v$ from a vertex $u$, then the cyclic order of neighbors is used by $r$ as a linear order of $\deg(v)-1$ neighbors by cutting and removing $u$.

The robots act according to the Look-Compute-Move (LCM) model.
In this model, their actions are decomposed into three phases: In the \textit{Look} phase, a robot takes a snapshot of its surroundings, i.e. the vertices and the robots within its visibility range. In the \textit{Compute} phase, it performs calculations based on the surrounding and determines a neighboring vertex as target vertex, or decide to stay at place. In the \textit{Move} phase, if necessary, it moves to the target vertex.  
The next LCM cycle starts when the target is reached.

Based on the activation times of the robots, there are three main synchronization models studied in the literature: the fully synchronous (FSYNC), the semi-synchronous (SSYNC), and the asynchronous (ASYNC).
In the FSYNC model, all robots are activated at the same time, and they perform their Look, Compute, and Move phases synchronously at the same time, which is ensured by a global clock.
In the SSYNC model, some robots might skip an LCM cycle and stay inactive.
In the ASYNC model, there is no common notion of time available: the robots activate independently after a finite but arbitrary long time, and perform their LCM cycles.
Moreover, their LCM cycle length is not fixed; it also can be arbitrarily long.

The robots are \textit{autonomous}, i.e. no central coordination is present, \textit{homogeneous}, i.e. all the robots have the same capabilities and behaviors, \textit{anonymous}, i.e. they cannot distinguish each other, \textit{myopic}, i.e. they have limited visibility range, and \textit{silent}, i.e. they have no communication capabilities and cannot directly talk to one another.
However, \textit{luminous} robots can communicate indirectly by using a light.
Such robots have a light attached to them, which is externally visible by every robot in their visibility range.
They can use a finite set of colors (including the color when the light is off) 
representing the value of a state variable.
The robots are allowed to change these colors in their Compute phase.
We denote the availability of lights using a superscript representing the number of colors.
In particular, we denote by $X^i$ the model $X \in \{$ASYNC, SSYNC, FSYNC$\}$ when every robot is enhanced
by a light with $i > 1$ colors. In the ASYNC$^{O(1)}$ model, the robots use a constant number of colors (see, e.g. \cite{das2012}).

\vspace{1ex}
\noindent\textbf{Related Work:}
The Filling (or, Uniform Dispersal) problem was introduced by Hsiang et al. \cite{Hsiang2002}, where the robots enter an unknown but connected orthogonal area and have to disperse.
The area is subdivided into pixels, and at the end of the dispersion, each pixel has to be occupied by exactly one robot.
Hsiang et al. \cite{Hsiang2002} considered this problem in the FSYNC model. They assumed that robots have a limited ability to communicate with nearby robots. They proposed two solutions, BFLF and DFLF, both modeling generally known algorithms: BFS and DFS. DFLF required a visibility range of 2 hops. It was assumed that the robots are able to detect the orientation of each other.
Barrameda et al. \cite{Barrameda2008,Barrameda2014} investigated the asynchronous case. In \cite{Barrameda2008} the authors assumed common top-down and left-right directions for the robots and showed that robots with visibility range of 1 hop and 2 bits of persistent memory could solve the problem in an orthogonal area if the area does
not contain holes, without using explicit communication in finite time.
In \cite{Barrameda2014} Barrameda et al. presented two methods for filling an unknown orthogonal area in presence of obstacles (holes) in the ASYNC model. Their first method, called TALK, requires a visibility range of 2 hops\footnote{In \cite{Barrameda2014} it is assumed that the robot sees all eight sourrounding cells and able to communicate with robots at that eight cells. Assuming orthogonal movements, a cell sharing only one corner with the current cell of the robot are reachable in two hops.} if the robots have explicit communication.
The other method, called MUTE, does not use explicit communication between the robots, but it requires a visibility range of 6. Both methods need $O(1)$ bits of persistent memory and terminate in finite time.

In \cite{hideg2017,hideg2018} 
the Filling problem has been investigated in the FSYNC model.
In \cite{hideg2017} %
the authors gave a solution for the orthogonal Filling problem by using robots with $1$ hop visibility range and $O(1)$ bits of persistent memory for both the Single and Multiple Door cases.
In \cite{hideg2018} %
a method for a general Filling problem has been presented, where the area is represented by an arbitrary connected graph.
The robots require $1$ hop visibility range and $O(\Delta)$ bits of persistent memory, where $\Delta$ is the degree of the graph.
For the $k$-Door case, the memory requirement is $O(\Delta \cdot \log k)$.
The general method is called the Virtual Chain Method (VCM), which is a leader-follower method.
In the VCM, the robots form a chain and fill the area mimicking a DFS-like traversal of the graph.
The algorithms presented in \cite{hideg2017} and \cite{hideg2018} are intensively utilizing the synchronous nature of the model to avoid collisions and backtracking.

The model of luminous robots was introduced by Peleg \cite{Peleg2005}. Subsequently, significant amount of research has been carried for a plenty of problems using this model (e.g. 
\cite{Sharma2018ipdps,Bhagat2017,Bose2019,Feletti2018,Flocchini19opodis,Kamei19,DiLuna17,Tixeuil2018,Sharma2016,Sharma2017,Sharma2017ipdps}).
Das et al. \cite{das2012,Flocchini16} considered the model, where the robots can move in the continuous Euclidean plane, and they proved that the asynchronous model with a constant number of colors ASYNC$^{O(1)}$ is strictly more powerful than the semi-synchronous model SSYNC, i.e. ASYNC$^{O(1)}$ $>$ SSYNC. 
Das et al. \cite{Flocchini16} also prove that there are problems that robots cannot solve without lights, even if they are fully synchronous, but can be solved by asynchronous luminous robots with $O(1)$ colors.

D'Emidio et al. \cite{Navarra16} have shown that on graphs one task can be solved in the fully synchronous model FSYNC but not in the asynchronous lights-enhanced model, while for other tasks, the converse holds.
In this work, we show that the Filling problem can be solved in both models by robots with $1$ hop visibility range and $O(1)$ bits of persistent memory.

\noindent\textbf{Our Contribution:}
In this work, we present solutions for the Filling problem by luminous robots on graphs in the ASYNC$^{O(1)}$ model.

First, we describe a method, called PACK, which solves the problem by robots with $1$ hop visibility range, $O(\log \Delta)$ bits of persistent memory, and $\Delta + 4$ colors for the single Door case, including the color when the light is off.
We analyze the algorithm in terms of asynchronous rounds, where a round means the smallest time interval in which each robot, which has not yet finished the algorithm, has been activated at least once. We show that this algorithm needs $O(n^2)$ asynchronous rounds. 
Regarding asynchronous algorithms for the Filling problem, former works only guarantee termination within finite time. Our analysis provides the first asymptotic upper bound on the running time in terms of asynchronous rounds.

Then we show how the number of colors can be reduced to $O(1)$ at the cost of running time. The algorithm with $1$ hop visibility range, $O(\log \Delta)$ bits of persistent memory, and $O(1)$ colors needs $O(n^2\log \Delta)$ rounds.

After this, we show how the running time can be significantly improved by robots with a visibility range of $2$ hops, with no communication, $O(\log \Delta)$ bits of persistent memory, and $\Delta + 4$ colors, by presenting the algorithm called BLOCK.
This algorithm needs $O(n)$ rounds.

Then we extend the BLOCK algorithm for solving the $k$-Door Filling problem, $k\geq 2$, by using $O(\log \Delta)$ bits of memory and $\Delta + k + 4$ colors, including the color when the light is off.   
The visibility range of $2$ hops is optimal for the $k$-Door case (a counterexample when this problem cannot be solved in the ASYNC model with a visibility range of $1$ hop was presented in \cite{Barrameda2008}, also holds for the ASYNC$^{O(1)}$ model).

\section{PACK Algorithm}
Now we describe the PACK algorithm to solve the Filling problem for an area represented by a connected graph of $n$ vertices. PACK is based on the Virtual Chain Method described in \cite{hideg2018}, in which the robots filled the area in a DFS-like dispersion.

The robots are allowed to be in one of the following states: None, Follower, Leader, Finished.
They are initialized with None state when placed at the Door.
The first robot becomes the Leader and moves to a vertex that has never been occupied before (these vertices are called \textit{unvisited} vertices).
The rest of the robots will become Followers and follow the Leader, until the Leader becomes stuck (i.e. no unvisited neighbors available).
Then the robot behind the Leader, called the \textit{successor} robot, becomes a new Leader and moves if possible.
The previous Leader switches to Finished state. %
The algorithm terminates when each robot is in Finished state.

The name Virtual Chain comes from the fact that all active robots (i.e. the Leader and the Followers) are on the path traversed by the current Leader from the Door. This path is called the chain.
The chain contains only visited vertices, which can be occupied by the Followers.
Each Follower follows its \textit{predecessor}, which is the previously placed robot. %
The difficulty is to select the next target vertex for the Leader with a visibility range of 1 hop by ensuring that no other robot can  move to that vertex because the Leader can not see all adjacent vertices of the target and robots on those vertices do not see the Leader.

We define the state \textit{Packed} for the chain. The chain is in Packed state, when each Follower is immediately behind its predecessor, i.e. each vertex on the path traversed by the current Leader from the Door is occupied by a Follower robot.
In this state none of the robots can move except the Leader.
Therefore, only the Leader has to know this state.

\vspace{1ex}
\noindent\textbf{The concept:}
The Leader moves to unvisited vertices until there is no such neighboring vertex.
Before each movement, the Leader waits for Packed state; thus, it cannot collide with other robots, and the Leader can decide which vertex is unvisited.
When the Leader has no neighboring unvisited vertex, it switches to Finished state and does not move anymore.
Its successor then becomes the Leader and the new Leader moves to other unvisited vertices.
The robots use $\Delta + 4$ colors, including the color, when the light is off.
The first $\Delta$ colors show the direction of the target vertex (for each vertex, the adjacent vertices are arranged in a fixed cyclic order), we refer to them as \textit{DIR} colors.
Furthermore, we use two colors, denoted by \textit{CONF} and \textit{CONF2} colors, for confirming that a robot has seen a DIR color of the predecessor, which allows the predecessor to move. For this purpose, the CONF color is sufficient, when the predecessor is a Follower robot. However, when the predecessor robot is the Leader, and it must change the target vertex after the Packed state is reached (details are provided later) or the predecessor becomes the Leader and it chooses an unvisited target vertex, it indicates the new direction with a new DIR color. Then the CONF2 color is needed for ensuring that the successor has seen the lastly shown DIR color.
Furthermore, we use an additional color, called \textit{MOV} color to indicate that a robot is on the way to its target vertex.

Now we describe the rules followed by the robots in different states.

\textit{Leader:} Can only move to an unvisited vertex.
When it wants to move, it shows the direction it wants to go to by setting the corresponding DIR color, and then it waits until its successor gives a confirmation that it can move by setting its CONF color. During the movement, the Leader shows the MOV color.
When its successor sets its CONF color, the chain is in Packed state.
This means each not occupied vertex is also an unvisited vertex (as each vertex in the path of the Leader is occupied by a robot).
If the Leader is still on the Door vertex, therefore, it does not have a successor, it can move without waiting for the CONF color.

\textit{Follower:} Follows its predecessor.
The Follower robot $r$ sets the CONF color if and only if $i$) the predecessor of $r$ is showing its direction, and $ii$) the successor of $r$ -- if exists -- have set its CONF color (i.e. the successor knows in which direction $r$ will move).
This allows the predecessor $r'$ of $r$ to move to its destination knowing: $i$) all the robots behind $r'$ have set CONF color, and $ii$) the robots behind $r'$ will not move until the predecessor of $r$ moved.
When $r'$ is the Leader, the chain is in Packed state.

\textit{None:} The robots are initialized with None state when they are placed at the Door.
If the robot $r$ in None state has no neighboring robot, then $r$ changes its state to Leader, chooses the unique neighboring vertex as target vertex, sets the MOV color, and starts moving there.
Otherwise, if the robot $r$ in None state has one neighboring robot, then $r$ becomes a Follower and sets the neighbor to its predecessor.

There are three special situations where we need the following additional rules:

\textit{Leader target change}: It might happen that the Leader $r$ chooses a target vertex $v$, which is unoccupied at the moment when $r$ performs its Look operation, however, when the successor of $r$ sets the CONF color and $r$ could start to move to $v$, another robot already moved to $v$. 
In such case, the Leader $r$ has to choose a new target, and the successor of $r$ has to know about this choice.
Assume first that $r$ has an unvisited neighboring vertex.
Then $r$ sets the corresponding DIR color and waits until its successor sets the CONF2 color.
Finally, the Leader moves to the target.

Note that the chain is in Packed state when the successor of the Leader $r$ sets the CONF color. 
If the Leader changes the target, no other robot can move until $r$ sets the CONF2 color, and the Leader moves to the target. Consequently, the Leader can change the target vertex only once between two movements. 

If $r$ does not have any unvisited neighboring vertex after $r$ sees the CONF color of the successor, then $r$ can not move anymore and the successor must take the leadership (see the rule below). The robot $r$ sets the $\Delta$ direction color, which has special meaning. The successor $r'$ confirms this by setting the CONF2 color. Then $r$ turn off its light $r'$ becomes the Leader. 
(Note that it would be possible to omit the Leader target change rule by introducing a new color for signaling the Packed state. Then the Leader would only show its direction once the Packed state is achieved, which could be acknowledged with the CONF color.)

\textit{Taking the leadership}: When the Leader $r$ cannot move anymore, its successor has to become the new Leader.
The Leader $r$ indicates that it does not have any unvisited neighboring vertex by setting its direction color to $\Delta$.
I.e. this color has a special meaning: it indicates that the Leader cannot move anymore and wants to switch to Finished state, and the leadership must be taken by its successor.  
When this is detected by the successor $r'$, it sets its CONF color, waits for the previous Leader to turn off its light, 
then $r'$ becomes the Leader. Afterwards, $r'$ tries to move to an unvisited vertex.

\textit{Setting movement color}: 
Before performing the movement, the robots have to set their color to MOV.
Keeping the old color could lead to an error. E.g., consider the following situation. 1. The Leader sets a DIR color. 2: The Follower confirms it by the CONF color. 3: The Leader moves by keeping the DIR color. 4: The Follower shows the corresponding DIR color, receives a CONF, and follows the Leader. 5: The Follower reaches its target, sees the old DIR color of the Leader and sets the CONF color, before the Leader chooses the new target. In order to prevent such situations, the  moving robots set their color to MOV and keep this color until the target is reached, and a new target is determined.
After the movement, the robot sets the previous position as its \textit{Entry} vertex

Pseudocode of the PACK algorithm is provided in the Appendix.

\subsection{Analysis}
\begin{lemma} Leader only moves to unvisited vertices. \label{lemma:m1_unvisited}\end{lemma}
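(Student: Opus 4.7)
My proof plan is to establish a structural invariant -- whenever the chain is in Packed state, every visited vertex of the graph is currently occupied by some robot -- and then verify that every Leader movement (with one trivial exception) occurs at an instant when the chain is Packed. Once these are in place the lemma follows immediately: the Leader always selects an unoccupied neighbor as its target, and by the invariant any unoccupied vertex must be unvisited.

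For the invariant I would argue as follows. A vertex becomes visited only when some robot first enters it, and it can become unoccupied only if the robot currently there vacates it. Robots in Finished state never move by rule, so their vertices stay occupied forever. Robots on the chain (Leader and Followers) vacate a vertex only by advancing, but in Packed state each Follower sits immediately behind its predecessor, so the chain covers every vertex from the Door to the current Leader; moreover the Follower rule forbids a Follower to move until its predecessor has both shown a fresh DIR color and actually moved. Consequently, at the precise instant the Leader initiates a movement while the chain is Packed, every chain vertex is still occupied, and together with the Finished robots' vertices these cover all visited vertices.

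Given the invariant, the lemma reduces to showing that each Leader move happens in Packed state. The only exception is the very first robot becoming Leader on the Door with no successor; in that case no vertex other than the Door has ever been visited, so its unique neighbor is vacuously unvisited. For every later Leader move the algorithm forces the Leader to wait for the successor's CONF (or, after a target change, CONF2) color, whose reception coincides by definition with the chain entering Packed state.

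The main subtlety I expect is the Leader target change rule: between the Leader's initial Look and the arrival of CONF, a different robot may have moved onto the originally chosen target. To handle this I would invoke the remark in the algorithm description that once CONF is received no other active robot may move until the Leader issues CONF2 and actually advances; hence the Leader's revised Look still takes place in Packed state, and its re-selected unoccupied neighbor is again unvisited by the invariant. This closes all cases of the lemma.
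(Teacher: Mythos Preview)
Your proposal is correct and follows essentially the same approach as the paper: both argue that the Leader moves only when the chain is in Packed state, and that in Packed state every unoccupied vertex must be unvisited, so the Leader's chosen unoccupied neighbor is unvisited. Your version is considerably more detailed---explicitly justifying the invariant via Finished robots and the Follower rule, and separately treating the first Leader step and the target-change case---whereas the paper compresses all of this into three sentences.
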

\begin{proof}
An unvisited vertex means no robot has occupied it before.
As the Leader can only move when the chain is in Packed state, %
each vertex not occupied by a robot is an unvisited vertex.
Therefore, each unoccupied vertex, which can be chosen by the Leader, is an unvisited vertex. 
\qed
\end{proof}
\begin{lemma} There can be at most one Leader at any time. \label{lemma:m1_one_leader} \end{lemma}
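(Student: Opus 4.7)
My plan is to establish the lemma by induction on the sequence of events that change a robot's state, identifying every transition that can place a robot into the Leader state and showing that none of them can result in two simultaneous Leaders. According to the PACK algorithm, a robot becomes a Leader only via two mechanisms: (a) a None robot placed at the Door finds no neighbor at its Compute step and switches to Leader, or (b) a Follower takes the leadership after its predecessor, the current Leader, exhibits the special $\Delta$ direction color.

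For mechanism (b), the argument is direct. The Taking the leadership rule sequentially forces the outgoing Leader $r$ to wait until its successor $r'$ has set the CONF (or CONF2) color, and then to switch its own light off (thereby entering the Finished state), before $r'$ observes this change and itself becomes the new Leader. Hence, at the instant $r'$ transitions to Leader, $r$ is already Finished, so the total number of Leaders is preserved.

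The harder case is (a), and the main obstacle will be the adversarial asynchronous scheduling: I must argue that whenever a new None robot at the Door actually performs its Look, its unique neighbor $Y$ (recall that the Door has degree $1$) is occupied, unless the graph contains no other robot. To this end I would prove, in parallel with the main claim, an auxiliary invariant: whenever the Door becomes empty and a fresh None robot is placed, $Y$ is occupied by the robot that just vacated the Door. Because the Door has degree $1$, a vacating robot can only have moved to $Y$, so $Y$ is necessarily occupied at the instant of placement. To rule out that this robot slips off $Y$ before the new None performs its Look, I would invoke the Follower rule: a robot at $Y$ can advance only once its successor at the Door has set the CONF color, which in turn requires the Door occupant to have already performed a Look and become a Follower via the None-to-Follower rule. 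Consequently, the None robot necessarily performs its Look while $Y$ is still occupied, becomes a Follower rather than a Leader, and mechanism (a) creates a Leader only for the very first robot entering the initially empty graph. Combining both cases yields the lemma.
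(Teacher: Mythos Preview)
Your proof is correct and follows the same overall strategy as the paper's: identify the transitions into the Leader state and show that each preserves uniqueness. The paper's own argument is much terser---it simply asserts that the first robot placed becomes the Leader and that thereafter a robot can become Leader only after the previous one has Finished (via the \emph{taking the leadership} rule), without spelling out why mechanism~(a) cannot fire a second time. Your careful treatment of case~(a), using the degree-$1$ Door together with the fact that the occupant of $Y$ must wait for the Door robot's CONF before advancing, makes explicit a point the paper leaves implicit. One minor imprecision: in your argument for~(a) you invoke ``the Follower rule'' for the robot at $Y$, but that robot may well be the Leader (indeed, after the first move it \emph{is} the Leader); however, the Leader is equally bound to wait for its successor's CONF before moving, so the conclusion is unaffected.
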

\begin{proof}
Recall the rule \textit{taking the leadership}.
When a Leader $r$ becomes stuck, $r$ signals this with a special color $\Delta$ and switches to Finished state after the successor $r'$ sets the CONF color. Then $r'$ becomes the new Leader, and acts accordingly.

The first robot placed becomes the Leader, and from that time, each robot can become a Leader after the previous one became Finished.
Therefore, at most one Leader can exist at any time during the dispersion. 
\qed
\end{proof}
\begin{lemma} Robots cannot collide. \label{lemma:m1_collision} \end{lemma}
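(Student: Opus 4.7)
The plan is to argue collision-freeness by case analysis on the type of the moving robot, using as an invariant that whenever a robot performs its Move phase, (i) its target vertex is unoccupied at the moment the move is committed, and (ii) no other robot is simultaneously committed to that same target. I would rely heavily on Lemmas~\ref{lemma:m1_unvisited} and~\ref{lemma:m1_one_leader}, together with the DIR/CONF/CONF2 handshake that is the core of the PACK protocol.

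First I would handle the Leader. By Lemma~\ref{lemma:m1_unvisited} its target $v$ is unvisited, so no Follower has ever occupied $v$ (Followers only move to visited vertices on the chain), and by Lemma~\ref{lemma:m1_one_leader} there is no competing Leader that might have stepped onto $v$. A None-state robot can move only from the Door to its single neighbor, so it can only threaten vertices adjacent to the Door; I would argue separately that when the Door neighbor is occupied the incoming robot becomes a Follower and stays put. The Leader commits to the move only after the successor has displayed the CONF color, which means the chain is Packed and every robot behind the Leader is frozen by the protocol until the Leader updates its DIR; hence no other robot moves during the Leader's Move phase.

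Next I would handle the Follower. A Follower $r$ moves onto the vertex $u$ previously held by its predecessor $r'$; it only does so after observing a fresh DIR color at $r'$, which by the protocol is shown only \emph{after} $r'$ has reached a new vertex distinct from $u$, so $u$ is empty. The CONF handshake with the successor of $r$ ensures the successor will not move until it sees $r$'s new DIR (which is shown only after $r$ has moved), so no robot behind $r$ races onto $u$. For None robots, I would note that a new robot appears at the Door only after the Door becomes empty; because the Door has degree $1$, the only possible target for a None robot is its unique neighbor, and the rule forces it to become a Follower (and stay) if that neighbor is occupied, eliminating collisions at entry.

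The main obstacle I expect is the Leader target-change scenario, which is the only place where asynchrony can invalidate the Leader's Look. The hard part is to rule out that, between the Leader's Look and the actual Move, the chosen unvisited neighbor $v$ becomes occupied by some other robot. I would argue this as follows: the only robot that could ever occupy an unvisited vertex is a Leader (Lemmas~\ref{lemma:m1_unvisited} and~\ref{lemma:m1_one_leader}), and the current Leader is still waiting for CONF rather than moving, so no agent can reach $v$ by following the Leader's chain. The residual case — where the successor Follower's asynchronous transition changes the Leader's perceived state and forces a target re-selection — is exactly what the CONF2 handshake was designed for: after the Leader posts a new DIR, it waits for CONF2, which guarantees the successor has registered the new direction before the Leader moves, and simultaneously prevents the successor from following the stale DIR onto a vertex the Leader has not yet vacated. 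Concluding that the invariant is preserved in every case then yields the claim. \qed
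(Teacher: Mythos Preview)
Your proposal is correct and follows essentially the same approach as the paper: case-split on Leader versus Follower, invoke Lemma~\ref{lemma:m1_unvisited} to confine the Leader to unvisited vertices and Lemma~\ref{lemma:m1_one_leader} to rule out a competing Leader, and observe that each Follower tracks a unique predecessor onto already-visited vertices, so Follower--Follower and Leader--Follower collisions are impossible. The paper's own proof is a terse two-paragraph version of exactly this argument and does not separately analyse the handshake or the target-change scenario; one small slip in your write-up is that the predecessor's DIR color is displayed \emph{before} it leaves $u$, not after---the reason $u$ is empty when the Follower steps onto it is simply that the Follower's rule checks that its target is unoccupied before moving.
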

\begin{proof}
When placed, except for the first robot, each robot has only one neighbor robot.
That will be its predecessor, which the robot will follow during the algorithm.
Before the predecessor moves away, it shows which direction it moves. Therefore, the Follower can always follow it.
As each Follower robot has one predecessor, they cannot collide with each other.

However, the Leader does not follow its predecessor (as it does not have any).
It is required to move to unvisited vertices in order to avoid collisions with Followers (which only move to already visited vertices).
As there is only one Leader and it always moves to unvisited vertices, collisions are not possible. 
\qed
\end{proof}
\begin{lemma} PACK fills the area represented by a connected graph. \label{lemma:m1_fills} \end{lemma}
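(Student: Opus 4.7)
The plan is to verify that PACK implements a depth-first traversal of the connected graph starting from the Door, with one robot assigned to each vertex at the end. I would proceed in three main steps.

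First, I would establish a liveness property: whenever the current Leader $r$ at vertex $v$ has an unvisited neighbour, $r$ performs a movement within finitely many asynchronous rounds. Tracing the protocol, $r$ selects an unvisited neighbour, displays the corresponding DIR colour, and waits for the CONF (or CONF2, if a target change occurs) of its successor. The successor, in turn, waits for the CONF of its own successor, and so on down the chain; the tail Follower has no successor and can confirm unconditionally, so a wave of CONF colours propagates upward until it reaches $r$. Lemma~\ref{lemma:m1_unvisited} and Lemma~\ref{lemma:m1_collision} guarantee that the target remains unoccupied (or can be changed once to another still-unvisited neighbour), so the move is actually executed. A new robot is placed at the Door each time it is vacated, so the chain is refilled from behind and Packed state is reachable again.

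Second, I would show that the sequence of moves mimics a depth-first search. When the Leader $r$ at $v$ is stuck, it sets the special colour $\Delta$, the successor $r'$ confirms and becomes the new Leader at the predecessor vertex $u$. From $u$ the new Leader either picks another unvisited neighbour, descending into a new DFS branch, or is itself stuck, in which case the backtrack continues. Because every occupied vertex is visited and only unvisited neighbours are considered by a new Leader, the protocol distinguishes already explored subtrees from unexplored ones. By induction on the order of leadership transfers, the visited set at any moment equals the set of vertices reached so far by DFS from the Door. Connectivity of the graph then implies that DFS eventually visits all $n$ vertices.

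Finally, I would conclude the filling. Each Finished robot remains stationary, and by Lemma~\ref{lemma:m1_collision} distinct Finished robots occupy distinct vertices. Whenever the Door is vacated a new robot is placed there, so as the DFS unwinds and the chain shrinks back toward the Door, one robot becomes Finished at each successive vertex, including the Door itself. Thus exactly $n$ robots are placed, one per vertex, and the area is filled. The main obstacle is the first step: showing that in the asynchronous setting no \emph{wait} (for CONF, CONF2, or for a new DIR colour) blocks progress indefinitely; this reduces to arguing that the dependency relation along the chain is acyclic and terminates at the unconstrained tail, so the Packed state is reached infinitely often.
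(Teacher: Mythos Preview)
Your approach is correct but follows a genuinely different route from the paper. The paper gives a compact two--case contradiction: assuming the algorithm has terminated with some vertex $v$ unoccupied yet adjacent to a Finished robot, it argues (i) if $v$ is unvisited, the \emph{last} neighbour of $v$ to become Finished still had an unvisited neighbour and therefore could not have switched to Finished; (ii) if $v$ was visited but later vacated, the successor of the last occupant would have moved into $v$, contradicting ``last''. Termination itself is not argued here; it is delegated to the $O(n^2)$ running-time theorem.

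You instead build a constructive argument: liveness (the Leader always eventually moves or hands off), DFS-equivalence (so connectivity forces every vertex to be visited), and then one Finished robot per vertex. This buys you termination as a by-product, at the price of a longer proof. Two places deserve tightening. First, your DFS step depends on the fact that whenever a successor $r'$ at $u$ becomes the new Leader, the chain is still Packed (nothing has moved since the old Leader set colour~$\Delta$), so the unoccupied neighbours of $u$ that $r'$ sees are \emph{exactly} the unvisited ones; without stating this explicitly the induction on leadership transfers is incomplete. Second, your Step~3 assertion that ``one robot becomes Finished at each successive vertex'' is precisely the paper's case~(ii) and needs the same one-line argument (a vacated chain vertex is immediately refilled by the successor), which you invoke but do not spell out.
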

\begin{proof}
For contradiction, assume the area is not filled when the algorithm terminates.
As the area is connected, there is a vertex $v$ that is not occupied and has a neighboring robot in Finished state.
If $v$ is unvisited, let $r$ be the last neighboring robot of $v$, which became Finished.
However, $r$ cannot switch to Finished state since there is an unvisited vertex neighboring to it.
This contradicts the assumption that $v$ remains unoccupied.

Assume now $v$ is unoccupied, but it has been visited during the algorithm.
Let $t$ be the last time $v$ was occupied by a robot $r$.
After $r$ moves from $v$, its successor will occupy $v$.
This contradicts the assumption that $t$ was the last time of occupation of $v$.
This proves the claim the area is filled when the algorithm terminates. 
\qed
\end{proof}
\begin{theorem}
Algorithm PACK fills an area represented by a connected graph in the ASYNC model by robots having a visibility range of $1$ hop, $O(\log \Delta)$ bits of persistent storage, and $\Delta + 4$ colors, including the color when the light is off.\label{thm:m1}
\end{theorem}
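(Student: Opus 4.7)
My plan is to assemble the theorem from the four preceding lemmas and then separately account for the three resource parameters (visibility, memory, colors) and for the fact that the argument works in the ASYNC model rather than only in SSYNC/FSYNC.

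First I would argue correctness. The Filling problem requires that, upon termination, every vertex is occupied by exactly one robot and no collisions have occurred. Lemma~\ref{lemma:m1_collision} rules out collisions throughout the execution, and Lemma~\ref{lemma:m1_fills} shows that every vertex is occupied when the algorithm terminates; together they give correctness. Here I would also note that termination itself follows from the fact that the Leader only enters previously unvisited vertices (Lemma~\ref{lemma:m1_unvisited}), so the set of visited vertices is strictly monotone; since the graph has $n$ vertices, eventually no unvisited neighbor exists for the Leader, at which point the leadership is transferred via the special color $\Delta$ and the ``taking the leadership'' rule, and the process either continues on a new Leader or everyone is Finished.

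Next I would justify that the arguments of Lemmas~\ref{lemma:m1_unvisited}--\ref{lemma:m1_fills} are valid in ASYNC. The key invariant used is that the Leader only moves when the chain is in Packed state; this is enforced by the handshake via the DIR/CONF/CONF2/MOV colors, which the robots set and read in their own Look--Compute--Move cycles without any shared notion of time. The ``setting movement color'' rule is what prevents a stale DIR color from being misinterpreted across arbitrary LCM interleavings, and the Leader-target-change rule handles the only genuine race (a unvisited neighbor becoming occupied between the Leader's Look and Move). So the lemma proofs carry over to ASYNC without modification.

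Finally I would account for the resource bounds. Visibility: every rule used by None, Follower, Leader, and during the leadership transfer references only the predecessor, the successor, and the Leader's own neighbors, all of which lie within $1$ hop. Memory: each robot stores its state (one of four constants), the direction of its Entry vertex (an index in $\{1,\dots,\Delta\}$), and its current target direction if any, all of which fit in $O(\log \Delta)$ bits of persistent storage. Colors: the DIR colors occupy $\Delta$ values (including the special value $\Delta$ used for leadership transfer), plus CONF, CONF2, MOV, and the off color, giving $\Delta+4$ colors in total. Combining correctness with these three bounds yields the theorem. The main potential obstacle is convincing the reader that the handshake really is race-free in ASYNC, but since the four preceding lemmas already argue this in detail, in the final writeup this step reduces to citing them.
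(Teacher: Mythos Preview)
Your proposal is correct and follows essentially the same approach as the paper: cite Lemma~\ref{lemma:m1_collision} and Lemma~\ref{lemma:m1_fills} for correctness, then enumerate the persistent variables and the color set to obtain the stated resource bounds. The only minor discrepancy is in the memory accounting---the paper lists \emph{State}, \emph{Target}, and \emph{NextTarget} (each $O(\log\Delta)$ bits) rather than the \emph{Entry} direction you mention, but either enumeration yields the same $O(\log\Delta)$ bound, and your added remarks on termination and ASYNC validity are simply extra justification the paper leaves implicit.
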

\begin{proof}
As the area is filled (by Lemma~\ref{lemma:m1_fills}), and collisions are not possible (by Lemma~\ref{lemma:m1_collision}), the area will be filled without collisions.
The robots require $O(\log \Delta)$ bits of memory to store the following: \textit{$State$} (4 states: 2 bits), \textit{$Target$} (direction of the target vertex: $\lceil \log \Delta \rceil$ bits), \textit{$NextTarget$} (direction of the vertex, where the robot needs to move after the vertex $Target$ is reached: $\lceil \log \Delta \rceil$ bits).
Regarding the number of colors, the robots use $\Delta$ colors to show the direction where the target of the robot is. There are two additional colors (CONF and CONF2) for confirming the robot saw the signaled direction of the predecessor and one color (MOV) during the movement. 
\qed
\end{proof}
Now we analyze the running time of the algorithm
in terms of asynchronous rounds.
An asynchronous round means the shortest time in which each robot, which is not in Finished state yet, has been activated at least once and performed an LCM cycle. 

\begin{theorem}
The algorithm PACK runs in $O(n^2)$ asynchronous rounds. \label{thm:m1b}
\end{theorem}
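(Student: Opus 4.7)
The plan is to decompose the execution into \emph{Leader events} -- moments when the current Leader either moves to an unvisited vertex or is replaced by a new Leader via the \emph{taking the leadership} rule -- and to bound the round cost of each event, then multiply by the total number of events. By Lemma~\ref{lemma:m1_unvisited} the Leader only moves to unvisited vertices, so the number of forward Leader movements is at most $n-1$. By Lemma~\ref{lemma:m1_one_leader} there is at most one Leader at any time, and once a Leader switches to Finished it never becomes a Leader again, so the number of leadership transfers is also at most $n$. Hence the total number of Leader events over the entire run is $O(n)$.

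Next I would show that between two consecutive Leader events only $O(n)$ asynchronous rounds pass. After a Leader movement the chain must return to Packed state before the next movement: every Follower, starting from the one just behind the Leader and cascading back toward the Door, must observe its predecessor's new DIR/MOV color and move one step forward, after which a new robot is placed at the Door; then a CONF cascade has to propagate from the Door back to the Leader so that the Leader sees the CONF color of its successor. Both cascades have length equal to the current chain length, which is at most $n$. The key claim is that each such cascade advances by at least one robot per asynchronous round: once robot $r_j$ has completed its action (moved forward, or set CONF/CONF2), the definition of an asynchronous round guarantees that $r_{j+1}$ performs at least one full LCM cycle in the next round, so by the end of that round $r_{j+1}$ has seen the updated color of $r_j$ and executed the corresponding step. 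A symmetric argument handles the \emph{Leader target change} and \emph{taking the leadership} subprotocols, which each add only a constant number of additional cascade traversals and thus contribute $O(n)$ rounds per event.

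The main obstacle I expect is the second step: carefully showing that the cascade cannot get stuck and that each stage of it really does complete inside a single asynchronous round, despite the ASYNC freedom to separate Look, Compute, and Move by arbitrary delays. One has to verify, by case analysis on the color transitions (None $\to$ Follower/Leader, DIR $\to$ CONF, DIR $\to$ CONF2, MOV $\to$ DIR, $\Delta$-color handover), that the preconditions for a robot's action depend only on the currently visible color of its predecessor (and, where relevant, its successor), so that a single LCM cycle performed after the predecessor's change suffices to trigger the next stage. The MOV color is what prevents a Follower from reacting to stale DIR information, which is precisely the invariant needed to rule out "phantom" delays that would break the one-step-per-round bound.

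Combining the two bounds yields a total running time of $O(n) \cdot O(n) = O(n^2)$ asynchronous rounds, as claimed.
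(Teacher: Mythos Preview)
Your proposal is correct and follows essentially the same approach as the paper: bound the number of Leader events by $O(n)$ and show that the two $O(n)$-length cascades (forward movement of the chain, then CONF propagation back to the Leader) each advance by one robot per asynchronous round, giving $O(n)$ rounds between events. The only difference is cosmetic: the paper accounts separately for leadership transfers and observes they cost only $O(1)$ rounds each (since the chain is already Packed at that moment), rather than the looser $O(n)$ you allow, but both bookkeepings yield the same $O(n^2)$ total.
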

\begin{proof}
Assume a chain containing $r_1$, $r_2$, $\dots$, $r_i$ %
(where $r_1$ is the active Leader, and $r_2$, $\dots$, $r_i$ are on the path from the Leader to the Door), and assume that the chain is in Packed state.

Assume first that the Leader $r_1$ has an unoccupied neighboring vertex.
Denote by $T$ the time between two consecutive movements of the Leader.
We divide $T$ into three time intervals: $T=T_1 + T_2 + T_3$.
$T_1$ starts with the movement of the Leader, it includes the time, when all robots in the chain, making one step forwards.
$T_2$ starts with placing a new robot at the Door.
In $T_2$ the robots, starting from the Door, set their CONF color one by one.
This CONF color is 'propagated' to the Leader, meaning that the Packed state is reached.
$T_3$ starts when the Leader recognizes the CONF color of the successor, i.e. after achieving the Packed state.
Then the Leader might find its target occupied by another robot.
In this case, the \textit{Leader target change} rule will be used.

Let $t$ be the first asynchronous round of $T_1$, i.e. in round $t$ the Leader $r_1$ moves to its target and sets its direction color.
At the latest in round $t+1$ the robot $r_2$ detects that $r_1$ left its previous vertex $v$ and in that round $r_2$ moves to $v$ and sets its direction color.
This argument can be repeated to all robots until the last one $r_i$ moves at the latest in round $t+i-1$.
Therefore, $T_1 \leq i$.

Now the second phase $T_2$ starts with the placing of a new robot $r_{i+1}$ at the latest in the round $t+i$.
In that round, the new robot $r_{i+1}$ sets its color to CONF.
The robot $r_i$ sees this color at the latest in the next round $t_{i+1}$ and sets its color to CONF in that round.
Repeating this argument for $r_{i-1},\dots,r_2$, we obtain that $r_2$ sets the CONF color at the latest in round $t_{2i-1}$.
Therefore, $T_2 \leq i$.

Now $T_3$ starts. The Leader $r_1$ recognizes the CONF color of its successor at the latest in round $t_{2i}$.
Then Leader knows that the chain is in Packed state. 
If the target vertex $v$ of the Leader is unoccupied, the Leader can move immediately, since in Packed state each unoccupied vertex is unvisited. Otherwise, if $v$ is occupied, the \textit{Leader target change} protocol is performed, i.e. 1: the Leader chooses a new unoccupied neighboring vertex and shows the corresponding DIR color (in round $t_{2i}$ at the latest), 2: then its successor sets its color to CONF2 (in round $t_{2i+1}$ at the latest). At the latest in the round $t_{2i+2}$ the Leader recognizes this and can move.
Then $T = T_1 + T_2 + T_3 \leq 2 i + 2\leq 2n$ rounds.

Assume now that the Leader $r_1$ has no unoccupied neighboring vertex.
If $r_1$ is at the Door, then it turns the light off and switches to Finished state; the graph is filled.
Otherwise, $r_i$ sees the CONF color of $r_2$ and recognizes the Packed state in the round $t_{2i}$ at the latest. 
Then $r_1$ sets its $\Delta$ color in that round. 
The robot $r_2$ recognizes it in round $t_{2i+1}$ at the latest and sets its CONF color. 
The robot $r_1$ sees the CONF color in round $t_{2i+2}$ at the latest, $r_1$ turns its light off and switches to Finished state. 
The robot $r_2$ sees it in round $t_{2i+3}$ at the latest, $r_2$ becomes the new Leader in that round, and checks if there is an unoccupied neighboring vertex. If so, $r_2$ sets the corresponding DIR color in the same round
and waits for the CONF2 color from the successor. The successor sets the CONF2 color in round $t_{2i+4}$ at the latest, and $r_2$ sees the CONF2 color at the latest in round $t_{2i+5}$. 
Since the chain is already in Packed state, the robot $r_2$ can move in the same round (in round $t_{2i+5}$ at the latest). Otherwise, if $r_2$ has no unoccupied neighboring vertex, then the leadership has to be taken by its successor when the successor exists, i.e. $r_2$ is not at the Door. If $r_2$ is at the Door, then $r_2$ turns the light off and switches to Finished state; the graph is filled.

When a Leader can move, it occupies an unvisited vertex within $2n$ asynchronous rounds.
Otherwise, its successor takes the leadership and performs a target change. 
Taking the leadership and the target change need at most $5$ rounds.
Since the leadership is taken at most once by each robot during the whole algorithm, and there are $n$ robots in the filled graph, at most $5n$ time is used for all leadership taking with target change altogether. Therefore, after at most $2n^2+5n=O(n^2)$ rounds all vertices of the graph become filled.
\qed
\end{proof}
\noindent\textbf{Remark}:
In the ASYNC model, a robot can be inactive between two LCM cycles.
Since the inactive phase allowed to be finite but arbitrarily long, an asynchronous round and the runtime of the algorithm can also be arbitrarily long.
In the case where we do not allow inactive intervals between the LCM cycles and every LCM cycle of every robot takes at most $t_{max}$ time then we can upper bound the time of an asynchronous round by $2\cdot t_{max}$.

\begin{corollary}
$(i)$ 
Assume that every LCM cycle of every robot takes at most $t_{max}$ time and there are no inactive intervals between the LCM cycles. 
Then the running time of the PACK algorithm is $O(n^2 t_{max})$.
$(ii)$ 
In the FSYNC model the PACK algorithm needs $O(n^2)$ LCM cycles. 
\end{corollary}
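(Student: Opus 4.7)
The overall plan is to derive both statements as direct consequences of Theorem~\ref{thm:m1b}, which already gives an $O(n^2)$ upper bound on the number of asynchronous rounds used by PACK. In both parts I only need to bound the duration (or cycle count) of a single asynchronous round under the respective model assumption and multiply.

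For part $(i)$, I would first establish the observation foreshadowed in the remark preceding the corollary: under the assumptions that every LCM cycle lasts at most $t_{max}$ and that no robot is inactive between consecutive cycles, any time window of length $2\cdot t_{max}$ contains at least one full asynchronous round. Fix an arbitrary reference time $t_0$ and let $r$ be any robot that is not yet in Finished state. The LCM cycle of $r$ that is in progress at time $t_0$ must terminate by time $t_0 + t_{max}$; immediately afterwards $r$ starts a new LCM cycle (since it is never idle), which completes by time $t_0 + 2\cdot t_{max}$. Hence within the interval $[t_0, t_0 + 2\cdot t_{max}]$ every unfinished robot has completed at least one entire LCM cycle, which matches the definition of an asynchronous round. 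Combining this with the $O(n^2)$ round bound of Theorem~\ref{thm:m1b} yields a running time of $O(n^2)\cdot 2\cdot t_{max}=O(n^2\, t_{max})$.

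For part $(ii)$, I would use that in the FSYNC model a global clock activates all robots simultaneously and they execute their Look, Compute, and Move phases in lockstep. Thus during one FSYNC LCM cycle every robot (that is not in Finished state) is activated exactly once, which is precisely an asynchronous round as used in the analysis of Theorem~\ref{thm:m1b}. Conversely, one asynchronous round cannot span more than one LCM cycle. Therefore the $O(n^2)$ round bound translates one-to-one into an $O(n^2)$ bound on the number of LCM cycles.

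The only delicate point I anticipate is the constant factor in $2\cdot t_{max}$ for part $(i)$: one must account for the worst case in which the cycle that $r$ is executing at $t_0$ has only just begun, forcing a second full cycle to fit into the window before completion can be guaranteed. Beyond this routine accounting, both statements are immediate corollaries of Theorem~\ref{thm:m1b} and require no further algorithmic reasoning.
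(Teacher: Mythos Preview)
Your proposal is correct and matches the paper's intended approach: the paper does not give a separate proof of the corollary but simply states the remark preceding it, namely that an asynchronous round is bounded by $2\cdot t_{max}$ under the stated assumptions, and the corollary is then immediate from Theorem~\ref{thm:m1b}. Your derivation of both parts, including the $2\cdot t_{max}$ bound and the identification of an FSYNC cycle with an asynchronous round, is exactly what the paper leaves implicit.
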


\subsection{Filling of graphs using constant number of colors}
The PACK algorithm uses $\Delta+4$ colors (including the color when the light is off).
We can reduce the number of colors to $O(1)$ at the cost of the running time, as follows.
We encode the $L=\Delta+4$ colors by a sequence of $\lceil \log L \rceil$ bits and transmit this sequence by emulating the Alternating Bit Protocol (ABP), also referred to as Stop-and-wait ARQ (see, e.g. \cite{Tannenbaum10}). This protocol uses a sequence number from $\{0,1\}$ alternately to transmit the bits. The sender has four states corresponding to the transmitted bit $b\in \{0,1\}$ and the sequence number. The receiver has two states that represent which sequence number is awaited. The data bits are accepted with alternating sequence numbers.
This protocol ensures the correct transmission of the bit sequence without duplicates. 

We emulate the ABP by using six different colors, one for each of the four states of the sender and one for each of the two states of the receiver.
Seeing the current color of the sender, the receiver can decode the sequence number and the data bit. When a color corresponding to the correct sequence number is seen, the receiver sets its color, indicating that it waits for the next bit. 
When the sender sees the changed color of the receiver, it sets its color corresponding to the next data bit and the next sequence number.
Therefore, encoding an original color in a sequence of $\lceil \log L \rceil=O(\log \Delta)$ bits and transmitting this sequence
takes $O(\log \Delta)$ rounds.
This leads to the following Theorem.
\begin{theorem}
The modified Algorithm PACK fills an area represented by a connected graph in the ASYNC model by robots having a visibility range of $1$, $O(\log \Delta)$ bits of persistent storage and $O(1)$ colors. The algorithm needs $O(n^2\log \Delta)$ asynchronous rounds.
\end{theorem}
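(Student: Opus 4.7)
The plan is to derive this result from Theorem~\ref{thm:m1} and Theorem~\ref{thm:m1b} by showing that the Alternating Bit Protocol (ABP) emulation is a correct and time-efficient substitute for the direct reading of the $L=\Delta+4$ colors used in PACK. First I would fix, for every robot, a single ``neighbor of interest'' per communication (its predecessor or its successor, respectively), and replace each occurrence of ``setting a color $c$'' in the original PACK by ``initiating an ABP transmission of the $\lceil \log L\rceil$-bit encoding of $c$ to that neighbor,'' and each occurrence of ``reading a color $c$ of a neighbor'' by ``having fully received $c$ via ABP from that neighbor.'' Since ABP delivers each bit sequence exactly once and in order, regardless of arbitrary asynchronous delays between LCM cycles, the sequence of colors observed by each robot in the modified algorithm is identical to the sequence it would have observed in PACK.

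Second, I would lift correctness. Because every color observation occurs in the same relative order as in PACK, every invariant used in Lemmas~\ref{lemma:m1_unvisited}--\ref{lemma:m1_fills} is preserved: only the Leader moves to unvisited vertices (Lemma~\ref{lemma:m1_unvisited}), leadership transfer still produces at most one Leader at a time (Lemma~\ref{lemma:m1_one_leader}), no collisions occur (Lemma~\ref{lemma:m1_collision}), and the graph eventually fills (Lemma~\ref{lemma:m1_fills}). In particular, the Packed state is still detected only after the full CONF encoding has arrived, so a Leader never moves based on a partially transmitted color, and the Leader target change rule still fires under the same conditions.

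Third, I would account for time and space. A single ABP bit exchange takes $O(1)$ asynchronous rounds (one for the sender to update its encoded color, one for the receiver to acknowledge), so emulating a single PACK color change takes $O(\log \Delta)$ rounds. The proof of Theorem~\ref{thm:m1b} charges $O(n^2)$ rounds to a sequence of color-based handshakes (Leader movement plus propagation of CONF along the chain, leadership transfer, and target change); replacing each handshake by its ABP emulation multiplies the total by $O(\log \Delta)$, yielding $O(n^2 \log \Delta)$ rounds. For memory, each robot needs, in addition to PACK's $O(\log \Delta)$-bit state, a constant number of ABP control bits (sender/receiver state and current sequence number) plus an $O(\log \log \Delta)$-bit pointer indicating which bit of the current color encoding is being transmitted, so the total is still $O(\log \Delta)$ bits. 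The light itself uses only the six ABP colors, hence $O(1)$.

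The main obstacle I expect is the second step: making precise that ABP's safety survives PACK's subtlest moment, namely the \emph{Leader target change}. There, the Leader must read the successor's CONF, possibly emit a fresh DIR (now as a multi-round ABP transmission), wait for CONF2, and only then move, all while the rest of the chain must remain static. I would handle this by observing that in the modified algorithm a robot only commits to an action once ABP delivery is complete; during an ongoing transmission the receiver's externally visible color never changes, so no downstream robot can become active in a way that contradicts the PACK invariants. Once this is verified, the theorem follows from the two bounds above.
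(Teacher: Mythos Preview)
Your proposal is correct and follows essentially the same approach as the paper: encode each of the $L=\Delta+4$ colors as a $\lceil\log L\rceil$-bit string, transmit it via an Alternating Bit Protocol emulation using six light colors, and observe that each color transfer now costs $O(\log\Delta)$ rounds, turning the $O(n^2)$ bound of Theorem~\ref{thm:m1b} into $O(n^2\log\Delta)$. The paper's own argument is in fact just the informal paragraph preceding the theorem; your write-up is more detailed (in particular your explicit treatment of correctness preservation and of the Leader target change case) but not different in substance.
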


\section{BLOCK Algorithm}
The PACK algorithm solves the Filling problem in arbitrary connected graphs by robots with a visibility range of 1 hop.
An important property of the PACK algorithm is that the Leader can only move when the chain has reached the Packed state.
Now we consider robots with a visibility range of $2$ hops.
Then the robots see each robot, that potentially could choose the same target vertex.
The idea is that the Leader only chooses a vertex $v$ as the target, if the $1$ hop neighborhood of $v$ does not contain any other robot with the light turned on, except when the light showing direction $\Delta$ (i.e. the robot will not move anymore, it wants to switch to Finished state, and waiting for the confirmation of the successor). 
A vertex neighboring to a robot with its light on (except the color $\Delta$) is considered as \textit{blocked} vertex for the Leader.

We introduce the following additional rules for the robots:

\textit{Leader}: The Leader must not choose a blocked vertex as the target.
As the visibility range of the robots is 2 hops, the Leader can identify the blocked neighbors.
When only blocked or occupied vertices surround the Leader, it chooses to terminate its actions (sets the color $\Delta$ and after the confirmation of the successor it switches to Finished state), and the leadership will be taken by its successor.

\textit{Follower}:
Follower robots 'block' all their unoccupied neighboring vertices. As a result, all unoccupied vertices that are part of the chain are blocked: Before a Follower $r$ would move from a vertex $v$, it sets the DIR color corresponding to the target and blocks all of its unoccupied neighboring vertices. In particular, it blocks the target vertex. Thus the Leader cannot choose the same target. Then $r$ waits until the successor $r'$ sets its CONF color and $r$ moves from $v$. During the movement, the MOV color is set, which keeps the same unoccupied vertices blocked.  When $r$ leaves $v$, the vertex $v$ is blocked by $r'$.
These rules ensure that each vertex on the chain is either occupied or blocked.
Consequently, the Leader only moves to unvisited vertices.
The pseudocode of the BLOCK algorithm is provided in the Appendix.
\subsection{Analysis}
\begin{lemma} Leader only moves to unvisited vertices. \label{lemma:m2_unvisited}\end{lemma}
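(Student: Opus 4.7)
The plan is to prove that every vertex of the current chain is either occupied by a robot or blocked by an adjacent Follower, and then to combine this invariant with the Leader's movement rule to conclude that the only vertices it can target are unvisited.

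First I would fix notation. Let the \emph{chain} be the path $v_0, v_1, \dots, v_k$ from the Door $v_0$ to the current Leader's position $v_k$, traversed by the sequence of Leaders. Active robots (Leader and Followers) lie on chain vertices; Finished robots occupy off-chain vertices whose subtrees have already been exhausted by the DFS-like dispersion inherited from PACK. Call a vertex \emph{blocked} if some robot at distance one from it has its light on with a color other than $\Delta$.

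Next, I would prove by induction over the atomic events of the execution (placing a new robot at the Door, a color change, a Follower move, a Leader move, a leadership transfer) the invariant: at every instant, every chain vertex is either occupied by a robot or blocked by a Follower sitting on an adjacent chain vertex. The base case is trivial, and most events preserve the invariant directly because lights are set before they are needed. The delicate case is a Follower $r$ moving from $v_j$ to $v_{j+1}$: before the move $r$ carries the MOV color, so $v_j$ is blocked by $r$ itself; during and after the move $r$ keeps the MOV color and is still adjacent to $v_j$ from $v_{j+1}$; and while $v_j$ remains unoccupied, the successor of $r$ at $v_{j-1}$ also blocks $v_j$ by the Follower blocking rule. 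A symmetric argument handles extensions of the chain when a new robot enters at the Door or when the Leader advances to a new target.

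Then I would invoke the 2-hop visibility to justify that the Leader's blockage check is correct: if the Leader at $v_k$ inspects a candidate neighbor $v$, every robot whose light could make $v$ blocked sits at distance one from $v$, hence at distance at most two from $v_k$, and therefore appears in the Leader's snapshot. Combining this with the invariant and with the Leader's movement rule (never target a blocked or occupied vertex), the Leader cannot target any chain vertex, nor any vertex occupied by a Finished robot. Since every visited vertex is either currently on the chain or currently occupied by a Finished robot, any vertex the Leader is allowed to move to must be unvisited.

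The main obstacle will be the inductive step for the Follower move: one has to verify that there is no instant, during the Look--Compute--Move cycle of $r$, at which $v_j$ fails to be blocked. This is exactly why the MOV color is retained unchanged throughout the movement, and why the successor of $r$ continues to block $v_j$ from $v_{j-1}$; these two pieces together close the possible gap and make the invariant go through.
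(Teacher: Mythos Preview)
Your proposal is correct and rests on the same core observation as the paper: a vacated chain vertex remains blocked by the successor robot sitting immediately behind it. The route, however, is considerably more elaborate than the paper's. The paper dispatches the lemma in two lines: take any visited vertex $v$ adjacent to the Leader, let $r$ be the last robot that occupied $v$; since $r$ has left, its successor (which has not yet advanced to $v$, by the choice of $r$ as \emph{last}) still has its light on and therefore blocks $v$, so the Leader cannot target it. You instead set up a global invariant (every chain vertex is occupied or blocked), verify it inductively over all atomic execution events, and then combine it with the classification of visited vertices into chain vertices and Finished-occupied vertices. Your approach makes explicit several things the paper leaves tacit---that the successor's light stays on continuously, that the MOV colour bridges the gap during motion, and that off-chain visited vertices are always occupied by Finished robots---but it is more machinery than the lemma strictly requires. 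The paper's shortcut works because it suffices to argue about a single candidate vertex at a time rather than maintain an invariant over the whole chain. One small wording slip in your sketch: just before $r$ leaves $v_j$, the vertex $v_j$ is \emph{occupied} by $r$, not blocked by it; the blocking-by-$r$ argument applies only once $r$ is in transit or has reached $v_{j+1}$.
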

\begin{proof}
Consider a visited vertex $v$ neighboring to the Leader. Let $r$ be the last robot, that occupied $v$.
When $r$ left $v$, the successor of $r$ blocks $v$. Thus, the Leader cannot move to $v$. 
\qed
\end{proof}
\begin{lemma} There can be at most one Leader at a time. \label{lemma:m2_one_leader} \end{lemma}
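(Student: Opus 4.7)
The plan is to adapt the argument of Lemma~\ref{lemma:m1_one_leader} to the BLOCK setting, since the only substantive change in BLOCK is the criterion under which a Leader becomes stuck, not the mechanism of leadership transfer. First I would enumerate the two ways a robot can enter the Leader state: (i) it is placed at the Door and observes no neighboring robot (via the None rule); and (ii) it is the successor of a previous Leader and completes the leadership-handover protocol.

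For case (i), I would appeal to the assumption that the Door has degree $1$: every robot placed at the Door after the very first one finds the previously placed robot as its unique neighbor and therefore transitions to Follower rather than Leader. Hence case (i) produces exactly one Leader during the entire execution, namely the first robot placed at the Door.

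For case (ii), I would verify that the handover is strictly sequential. A stuck BLOCK Leader $r$ --- that is, one whose in-range neighbors are all either occupied or blocked --- sets the direction color $\Delta$ but \emph{retains} Leader state until it observes the CONF color of its successor $r'$; only then does $r$ turn its light off and switch to Finished. The successor $r'$, in turn, only changes its own state from Follower to Leader after observing $r$'s extinguished light. Since these observations occur strictly after the corresponding state changes, the instant at which $r'$ becomes Leader strictly follows the instant at which $r$ ceases to be Leader. Combined with an induction on the number of leadership transfers, this shows that at any time the set of Leader-state robots has cardinality at most one.

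The main obstacle --- and the only genuine departure from the PACK analysis --- is the broader notion of being stuck used by BLOCK, which fires when all neighbors are occupied \emph{or} blocked rather than merely when no unvisited neighbor is present. However, this broader criterion only affects \emph{when} the color $\Delta$ is set; the handover protocol that consumes that signal is identical to the one in PACK. Consequently, the uniqueness argument of Lemma~\ref{lemma:m1_one_leader} carries over once the two creation cases above are checked, giving the claim.
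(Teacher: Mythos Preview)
Your proposal is correct and follows essentially the same approach as the paper: the paper's proof is a one-line appeal to Lemma~\ref{lemma:m1_one_leader}, noting that the leadership-handover rule is unchanged in BLOCK, and you simply spell out that argument in detail while making the same observation that the broader ``stuck'' criterion only affects when the $\Delta$ color is set, not the handover protocol itself.
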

\begin{proof}
The arguments of Lemma \ref{lemma:m1_one_leader} can be repeated as the rule for taking of the leadership did not change. 
\qed
\end{proof}
\begin{lemma} Robots cannot collide. \label{lemma:m2_collision} \end{lemma}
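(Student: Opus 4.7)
The plan is to proceed by exhaustion on the pairs of robots that could potentially be involved in a collision. By Lemma~\ref{lemma:m2_one_leader}, there is at most one Leader at any time, so the only configurations to rule out are: (a) the Leader and a Follower both arriving at some vertex $v$, (b) two Followers both arriving at some vertex $v$, and (c) a newly placed robot at the Door colliding with an incoming robot. Since a target that is currently occupied would already be excluded by the Leader's ``occupied'' test and by the Follower's predecessor-driven behavior, the essential task is to show that no two robots ever share a move target.

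Case (a) is handled by a disjointness-of-targets argument. By Lemma~\ref{lemma:m2_unvisited}, every vertex the Leader enters is unvisited, whereas any Follower $r$ moves only to the vertex just vacated by its unique predecessor $p$, a vertex that has already been visited by $p$. Hence the Leader and any Follower never compete for the same vertex. For case (b), I would use the uniqueness of the predecessor: when a Follower is placed it adopts as predecessor the single neighboring robot it observes at the Door, and this predecessor never changes; symmetrically each robot has at most one successor. Therefore two distinct Followers have distinct predecessors, hence distinct previous-position targets. The DIR--CONF--MOV handshake carried over from PACK additionally serializes each Follower's move, so the vertex a Follower is leaving is not re-targeted by its successor until the successor's own CONF is set. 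Case (c) reduces to the degree-$1$ assumption on the Door: all chain movement proceeds strictly away from the Door, so no in-chain robot ever targets the Door, and a fresh robot is placed only when the Door is empty.

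The most delicate step, and the one where the new blocking rule really carries the argument, is to verify that in the asynchronous model nothing can sneak into the Leader's chosen vertex $v$ between its Look and its Move. I expect this to be the main obstacle, and I would address it by invoking the $2$-hop visibility together with the blocking invariant: any other robot $r$ whose current move could end at $v$ must be adjacent to $v$, hence within the Leader's $2$-hop view, and by the Follower rule $r$ carries a non-off light (DIR or MOV, never $\Delta$) throughout its move, which marks $v$ as blocked for the Leader. Consequently the Leader would not have chosen $v$ in its Look phase. Combining this with the target-set disjointness of case (a), the uniqueness argument of case (b), and the Door argument of case (c) covers every concurrency scenario and establishes the claim. \qed
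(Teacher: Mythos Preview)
Your proposal is correct and follows essentially the same approach as the paper: the paper's proof simply invokes the argument of Lemma~\ref{lemma:m1_collision} together with Lemmas~\ref{lemma:m2_one_leader} and~\ref{lemma:m2_unvisited}, i.e., Followers cannot collide with one another because each has a unique predecessor, and the unique Leader cannot collide with any Follower because the Leader enters only unvisited vertices while Followers enter only visited ones. Your case~(c) and the explicit asynchronous ``sneak-in'' analysis are additional detail beyond what the paper writes, but they are consistent with---and largely subsumed by---the visited/unvisited disjointness already established via Lemma~\ref{lemma:m2_unvisited}.
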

\begin{proof}
The arguments of Lemma \ref{lemma:m1_collision} can be repeated as Lemma \ref{lemma:m2_one_leader} only allows one Leader, which only can move to unvisited vertices (Lemma~\ref{lemma:m2_unvisited}). 
\qed
\end{proof}
\begin{lemma} BLOCK fills the area represented by a connected graph. \label{lemma:m2_fills}  \end{lemma}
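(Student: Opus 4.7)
The plan is to mimic the proof of Lemma~\ref{lemma:m1_fills} and argue by contradiction. Suppose that when the algorithm terminates --- every robot is in Finished state and therefore has its light off --- there is still some unoccupied vertex. By connectedness of the graph I can choose such a vertex $v$ whose neighbourhood contains at least one vertex occupied by a Finished robot. Let $r$ be the last neighbour of $v$ to have switched to the Finished state. I split the analysis according to whether $v$ was ever visited, matching the split used for PACK.

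\textbf{Case 1: $v$ is unvisited.} Let $T_r$ be the instant when $r$ switched to Finished. At that instant $r$ was a Leader whose every neighbour was blocked or occupied. Since $v$ is unvisited it is unoccupied at $T_r$, so $v$ must be blocked. By the definition of blocking, some robot distinct from $r$ within the $1$-hop neighbourhood of $v$ must have its light on with a colour different from $\Delta$ at time $T_r$. However, the choice of $r$ as the \emph{latest}-finishing neighbour of $v$ guarantees that at time $T_r$ every other neighbour of $v$ is either unoccupied or already holds a Finished robot whose light is off; any non-Finished robot sitting at a neighbour of $v$ at time $T_r$ would have to finish later than $r$ to end up Finished at termination, contradicting the choice of $r$. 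Hence no blocking robot can exist, and $r$ should have chosen $v$ as its target rather than terminating.

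\textbf{Case 2: $v$ has been visited.} Let $t$ be the last moment at which some robot $r'$ occupied $v$. Since $v$ is unoccupied at termination, $r'$ must leave $v$ at some later time. When $r'$ moves, either $v$ is the Door, in which case a new robot is immediately placed there, or the successor of $r'$ is sitting at a neighbour of $v$ (consecutive chain vertices are adjacent) and, by the Follower rule, moves into $v$ to follow $r'$. In either case some other robot occupies $v$ strictly after $t$, contradicting the maximality of $t$.

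The main obstacle beyond the PACK proof is Case 1: one has to rule out that $v$ is still blocked, not merely unoccupied, at the moment $r$ decides to finish. The key observation is that Finished robots have their lights off and therefore do not contribute to the blocked set; this is precisely what forces the choice of $r$ to be the latest-finishing neighbour of $v$ and links the BLOCK finishing rule to the non-blocking of $v$.
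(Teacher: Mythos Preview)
Your proof is correct and follows essentially the same approach as the paper's: a contradiction argument that picks the last robot $r$ to switch to Finished at a neighbour of $v$, observes that Finished robots have their lights off and therefore cannot block $v$, and concludes that $r$ should have moved to $v$ rather than terminating. Your write-up is in fact more explicit than the paper's (which is very terse) about why the blocking condition fails at time $T_r$; the paper simply asserts that all other neighbouring robots of $v$ are already Finished at that moment, whereas you spell out the reasoning. One minor point: your sentence ``any non-Finished robot sitting at a neighbour of $v$ at time $T_r$ would have to finish later than $r$ \ldots\ contradicting the choice of $r$'' implicitly assumes such a robot finishes \emph{at} a neighbour of $v$ rather than moving away first---this is true (a Follower that leaves is replaced by its successor, and by your Case~2 argument that vertex must end up occupied by a Finished robot, which then finished after $T_r$), but the paper glosses over the same step, so your level of rigor matches.
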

\begin{proof}
We use similar arguments to those in the proof of Lemma~\ref{lemma:m1_fills}.
Assume that all the robots are in Finished state, and there is an unoccupied vertex $v$, such that $v$ has at least one occupied neighbor.
Additionally, to the cases considered in the proof of Lemma~\ref{lemma:m1_fills}, we have to consider the case when $v$ is blocked, and all neighboring robots become Finished.
Let $t$ be last time when a robot $r$, neighboring to $v$, switches to Finished state.
Since all other neighboring robots of $v$ are in Finished state at time $t$, they do not block $v$.
Therefore, at time $t$ the robot $r$ can move to $v$ instead of switching to Finished.
Thus, we have a contradiction. 
\qed
\end{proof}
\begin{theorem}
Algorithm BLOCK fills the area represented by a connected graph in the ASYNC model by robots having a visibility range of $2$ hops, $O(\log \Delta)$ bits of persistent storage, and using $\Delta + 4$ colors, including the color when the light is off.
\end{theorem}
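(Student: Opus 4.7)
The plan is to mirror the proof of Theorem~\ref{thm:m1} closely, since the three lemmas for BLOCK (Lemmas~\ref{lemma:m2_unvisited}, \ref{lemma:m2_one_leader}, \ref{lemma:m2_collision}, \ref{lemma:m2_fills}) are the exact analogues of those used for PACK, and the state and color vocabulary of a robot has not been enlarged by the additional blocking behaviour. I would open the proof with the observation that correctness splits into two parts, completion and collision-freedom, and both are already in hand: Lemma~\ref{lemma:m2_fills} guarantees that when the algorithm terminates every vertex is occupied, while Lemma~\ref{lemma:m2_collision} (resting on Lemmas~\ref{lemma:m2_unvisited} and \ref{lemma:m2_one_leader}) guarantees that no two robots are ever at the same vertex. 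Combining these yields the correctness part of the claim.

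Next, I would account for the persistent memory. A robot only needs to store: its current \emph{State} (None, Follower, Leader, Finished; $2$ bits), its \emph{Target} direction, and its \emph{NextTarget} direction used when a target change occurs, each requiring $\lceil \log \Delta \rceil$ bits, plus the direction of its \emph{Entry} vertex (another $\lceil \log \Delta \rceil$ bits). The 2-hop visibility does not entail any additional persistent state because the blocking information is observed freshly in each Look phase from the lights of neighbours. Summing gives $O(\log \Delta)$ bits, identical to the PACK bound.

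Finally I would count colors. The blocking mechanism is not a new color: a Follower blocks its unoccupied neighbours simply by keeping its light on using one of the colors already defined in PACK (the DIR color corresponding to its current target, or MOV during motion). Hence the palette is unchanged from PACK: $\Delta$ DIR colors (the $\Delta$-th one again doubling as the ``successor must take the leadership'' signal), the two confirmation colors CONF and CONF2, the movement color MOV, and the color ``light off'', giving $\Delta + 4$ colors in total. I would close by noting that this enumeration together with the correctness argument establishes the theorem.

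The main obstacle is not technical but bookkeeping: one must verify carefully that the blocking rule introduced for BLOCK does not silently demand an additional color or persistent memory slot beyond what PACK already used. I would therefore emphasise in the proof that a Follower's ``block'' is nothing more than the visible side effect of its existing DIR/MOV light as seen from the Leader at distance~$2$, so no extra state is required on either the emitting or the observing side.
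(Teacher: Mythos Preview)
Your proposal is correct and follows essentially the same approach as the paper: invoke Lemmas~\ref{lemma:m2_fills} and~\ref{lemma:m2_collision} for correctness, then appeal to the unchanged state/color inventory of Theorem~\ref{thm:m1} for the resource bounds. Your extra bookkeeping (explicitly listing the \emph{Entry} slot and arguing that blocking is merely the 2-hop observation of existing DIR/MOV lights) is sound and more detailed than the paper's terse ``same data \ldots\ same set of colors'', but the argument is the same.
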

\begin{proof}
We can use the arguments of the proof of Theorem~\ref{thm:m1} as the area is filled (by Lemma~\ref{lemma:m2_fills}), and collisions are not possible (by Lemma~\ref{lemma:m2_collision}), the area will be filled without collisions.
The robots store the same data in their persistent storage as in Theorem~\ref{thm:m1} and use the same set of colors. 
\qed
\end{proof}
Now we provide runtime analysis of the BLOCK algorithm in the fully synchronous model.
\begin{theorem}
In the ASYNC model, the BLOCK algorithm fills the area represented by a connected graph in $O(n)$ asynchronous rounds. \label{thm:m2r}
\end{theorem}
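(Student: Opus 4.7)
The plan is to show that the Leader advances to a fresh unvisited vertex within a constant number of asynchronous rounds, so that the whole execution finishes in $O(n)$ rounds. The crucial speedup over PACK is that the Leader no longer has to wait for a Packed-state signal propagating back from the Door: by the $2$-hop visibility and the blocking rule for Followers, the Leader already sees every robot whose activation could change the status of a candidate target, so it may choose an unvisited, unblocked neighbor and begin the short handshake with its immediate successor alone, while the rest of the chain can advance in parallel.

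First I would isolate the local handshake performed by one Leader step. Let $r_1$ be the current Leader and $r_2$ its immediate successor. In some round $t$, $r_1$ inspects its $2$-hop neighborhood (using Lemma~\ref{lemma:m2_unvisited} to conclude that the chosen neighbor $v$ is truly unvisited) and sets the corresponding DIR color. Since $r_2$ is activated at least once during round $t+1$, it sees this DIR color and sets CONF in that round at the latest. In round $t+2$ at the latest, $r_1$ sees the CONF color and performs the move to $v$, setting MOV during the move and a new DIR (or the special $\Delta$ color) once $v$ is reached. Hence, whenever the Leader has an unvisited, unblocked neighbor, one successful Leader step takes $O(1)$ asynchronous rounds. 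By Lemma~\ref{lemma:m2_collision} no collision can occur during this handshake, and each such step occupies a previously unvisited vertex.

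Next I would handle leadership handover in the same asymptotic style. When $r_1$ finds only blocked or occupied neighbors, it sets the $\Delta$ color; $r_2$ responds with CONF in the next round, $r_1$ turns off its light and becomes Finished in the round after that, and $r_2$ promotes itself to Leader in the following round, immediately choosing a new target (or re-entering the handover protocol). Each handover therefore takes $O(1)$ rounds. By Lemma~\ref{lemma:m2_one_leader} every robot can become Leader at most once, so the total contribution of handovers to the runtime is $O(n)$. Combining the two bounds, there are at most $n$ successful Leader moves and at most $n$ handovers, each costing $O(1)$ rounds, and by Lemma~\ref{lemma:m2_fills} when no further move or handover is possible the graph is entirely filled, giving $O(n)$ asynchronous rounds.

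The main obstacle I expect is showing that the constant per Leader step really is robust under an adversarial schedule. Specifically, I need to argue that while $r_1$ waits for CONF from $r_2$, concurrent activations of later Followers $r_3,r_4,\ldots$ (which are running their own DIR/CONF handshakes and possibly shifting forward) cannot (i) cause the target $v$ of the Leader to become occupied or blocked between $r_1$'s Look and its Move, and (ii) cannot delay $r_2$'s CONF beyond the next round. Point (i) follows from $2$-hop visibility: any robot that could reach $v$ in one step lies in $r_1$'s snapshot, and the Leader's rule forbids choosing $v$ if such a robot exists with a non-$\Delta$ light. Point (ii) follows from the definition of an asynchronous round together with the CONF rule, which depends only on the predecessor's DIR color and not on the rest of the chain. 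With these two invariants verified, the per-step constant bound and hence the $O(n)$ total follow.
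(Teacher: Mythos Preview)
Your high-level strategy matches the paper's: bound each Leader step by a constant number of rounds and charge all handovers a total of $O(n)$. However, there is a real gap in your argument for the per-step constant.

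You correctly observe that in BLOCK the successor $r_2$ sets CONF as soon as it sees the Leader's DIR color, independently of the rest of the chain; so \emph{provided} $r_2$ is sitting on $r_1$'s Entry vertex, the handshake completes in three rounds. But you never establish that $r_2$ is there for the next step. After $r_1$ moves from $u$ to $v$, its new Entry is $u$, while $r_2$ is still at the old Entry $w$; $r_1$ cannot begin the next handshake until $r_2$ has advanced to $u$. In BLOCK a Follower's \emph{movement} (as opposed to its CONF) does require its own successor's CONF and presence at its Entry, so $r_2$'s catch-up depends on $r_3$, which depends on $r_4$, and so on down the chain. Your point~(ii) addresses only the CONF rule, not the movement rule, and therefore does not close this dependency; as stated, the adversary could stall $r_3,r_4,\dots$ early on and you have not ruled out that $r_2$ falls arbitrarily far behind.

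The paper fills exactly this gap with a pipelining argument: fixing a round $t$ at which $r_i$ and $r_{i+1}$ are adjacent, it shows $r_i$ moves by $t+2$ and $r_{i+1}$ sets DIR by $t+3$, and then proves---by going back to an earlier round $t'\le t$ where $r_{i+1}$ and $r_{i+2}$ were adjacent and repeating the same reasoning one level down---that $r_{i+2}$ must already be adjacent to $r_{i+1}$ by round $t+4$, so $r_{i+1}$ moves by $t+5$ and is adjacent to $r_i$ again by $t+6$. This recursive step down the chain is the missing ingredient; once you add it, your outline becomes essentially the paper's proof.
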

\begin{proof}
Assume that the chain contains the robots $r_1$, $r_2$, $\dots$, $r_j$, where $r_1$ is the current Leader  and $r_2$, $\dots$, $r_j$ are on the path from the Leader to the Door, and assume that the Leader $r_1$ occupied its position and its successor $r_2$ has arrived at the  previous position of $r_1$.
When the first robot $r_1$ is placed at the Door (i.e. $j=1$), it detects in the first asynchronous round, whether it is a Leader or a Follower.
If the only neighbor is unoccupied, it becomes a Leader and moves in the first round. The first round ends.
After $r_1$ left the Door, the next robot is placed there. 

Assume now that $j \geq 2$. Let $r_i$, $i<j$ be a robot ($r_i$ is either a Leader or a Follower) and assume that its successor $r_{i+1}$ at its previous vertex.
Let $t$ be the current asynchronous round. 
If $r_i$ is Leader, i.e. $i=1$, we additionally assume that it has an unblocked and unoccupied neighboring vertex $v$.
Otherwise, if $r_i$ is not the Leader, assume that the target vertex of $v$ of $r_i$ is unoccupied, i.e. the predecessor $r_{i-1}$ left $v$ already. 
Then $r_i$ sets the corresponding DIR color in round $t$.
At the latest in round $t+1$ the robot $r_{i+1}$ sees the DIR color and sets its color to the CONF, allowing $r_i$ to move.
At the latest in round $t+2$ the robot $r_{i}$ detects this and moves to its target $v$ and at the end of that round $r_{i}$ and $r_{i-1}$ become neighbors again.
Then, at the latest in round $t+3$ the robot $r_{i+1}$ detects that $r_i$ left the neighboring vertex $v'$. 
If $r_{i+1}$ is at the Door it moves at the latest in round $t+3$.
Otherwise, if $r_{i+1}$ is not at the Door, and therefore, a successor robot $r_{i+2}$ exists, $r_{i+1}$ has to wait for the confirmation of $r_{i+2}$ before the movement. 
We will show that $r_{i+2}$ must be at the neighbor vertex behind $r_{i+1}$ in round $t+4$ at the latest.
At the latest in round $t+4$ the robot $r_{i+2}$ sets its CONF color.
Therefore, $r_{i+1}$ can move to $v'$ at the latest in round $t+5$ and at the end of that round $r_{i+1}$ and $r_i$ become neighbors again. At the latest in round $t+6$ we have the same situation regarding $r_i$ and $r_i+1$ as in round $t$, i.e. $r_{i-1}$ shows its
DIR color and $r_{i}$ confirms it.

It remains to show that in round $t+4$ at the latest $r_{i+2}$ must be on the neighbor vertex of $r_{i+1}$. If $r_{i+2}$ is at the Door, then it appeared there after $r_{i+1}$ left the Door, i.e. before round $t$ and $r_{i+1}$ did not move; therefore they must be neighbors in round $t+4$.
Otherwise, let $t'$ be the latest round before $t$, where $r_{i+1}$ and $r_{i+2}$ were neighbors and the robot $r_{i+1}$ detects that its predecessor $r_i$ moved from the neighboring vertex and $r_{i+1}$ sets the DIR color. Then we can repeat the arguments with robots $r_{i+1}$, $r_{i+2}$, and round $t'$ described above, and we obtain that 
($i$) $r_{i+1}$ moves in round $t''\leq t'+2$ and at the end of round $t'+2$ the robots $r_{i}$ and $r_{i+1}$ become neighbors again, and
($ii$) $r_{i+2}$ can move again at the latest in round $t''+3$ and in round $t''+4$ at the latest $r_{i+2}$ and $r_{i+1}$ must be  neighbors. Since $t''\leq t$, in round $t+4$ at the latest $r_{i+2}$ and $r_{i+1}$ must be neighbors. 

Summarizing the above description, %
the robot $r_i$ moves at the latest in every $6^{th}$ round if $r_i$ is a Follower or it is a Leader with an unblocked and unoccupied neighbor.

Assume now that $r_i$ is Leader, its successor $r_{i+1}$ is at its previous vertex, and all neighboring vertices of $r_i$ are blocked or occupied in round $t$. Then $r_i$ sets its $\Delta$ color to show the successor that it has to switch to Finished state.
The successor $r_{i+1}$ confirms it at the latest in round $t+1$. At the latest in round $t+2$ the robot $r_i$ becomes Finished and turns the light off. At the latest in round $t+3$ the robot $r_{i+1}$ becomes the new Leader.
Therefore, the leadership is taken within $4$ rounds. At the latest in round $t+3$ the new Leader $r_{i+1}$ shows its new target if there is an unblocked and unoccupied neighboring vertex, or it sets the $\Delta$ to show the successor that it has to switch to Finished state.

When a Leader can move, it occupies an unvisited unblocked vertex in every $6^{th}$ round. Otherwise, its successor takes the leadership.
Since the leadership is taken at most once by each robot during the whole algorithm, and there are $n$ robots in the filled graph, at most $4n$ rounds used for all 'leadership taking'.
Therefore, after $6n+4n=10n$ rounds, all vertices of the graph become filled.
\qed
\end{proof}

\section{Multiple Doors}

We now consider the case in which there are $k\geq 2$ Doors.
For the $k$-Door Filling, there is a situation that cannot be solved by the above methods:
Let $v$ be an unvisited vertex, which is neighboring to (at least) two Leaders $r_1$ and $r_2$.
In order to fill the graph, exactly one of the Leaders, $r_1$ or $r_2$, has to move to vertex $v$.
If one of the robots, say $r_1$, has been activated earlier, then $r_1$ sets the direction color corresponding to $v$, and it prevents $r_2$ to move to $v$ ($r_1$ blocks $v$ from $r_2$).
However, if the activation times of $r_1$ and $r_2$ are exactly the same, then they would set the direction color at the same time, meaning they mutually block each other from moving to $v$.
If $r_1$ or $r_2$ has no other unvisited vertex in their neighborhood, then none of them could move, and particularly, none of them would occupy $v$.

We propose a protocol, which uses a strict priority order between the Leaders originating from different Doors.
We use the concept from \cite{Barrameda2008} and assume that robots entering from different doors have distinct colors.

\textbf{Priority protocol}: The robots have $k$ additional different colors corresponding to the Door they used for entering the area, where $k$ is the number of Doors.
We define a strict total order between these colors, called priority order.
We call these $k$ colors priority colors.
After showing the direction to the successor and after the successor has confirmed it, the Leader sets its color to its priority color (instead of the MOV color) and starts its movement. It arrives to its target showing its priority color.
We modify the blocking rule for the Leader as follows: If there is a robot with a direction color (except the special color $\Delta$), or confirmation color, or MOV color, or priority color with higher priority than $r$, then its neighbors are considered as blocked.
Since there is a strict total order between the priority colors, in such a situation exactly one of them is allowed to move there.

We modify the rule \textit{taking the leadership}: when the successor robot $r$ notices that the Leader is switching to Finished state (by setting the direction color to $\Delta$), $r$ confirms it by setting its color to the priority color of the old Leader.
\begin{lemma} Priority protocol does not allow collisions. \label{lemma:m3_collision} \end{lemma}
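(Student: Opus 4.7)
The plan is to mirror the structure of Lemma~\ref{lemma:m2_collision} and reduce the claim to ruling out Leader-Leader collisions. First, I would observe that Follower-Follower and Leader-Follower collisions are excluded by exactly the same arguments as in the single-door case: each Follower has a unique predecessor and only moves to the vertex just vacated by it, so the blocking imposed on chain vertices remains in effect, and the priority protocol affects neither the Follower blocking nor the property that a Leader moves only to unvisited vertices. The genuinely new case, which can arise only for $k \geq 2$, is that two Leaders $r_1$ and $r_2$ originating from different Doors attempt to move into a common vertex $v$.

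For this case I would fix the convention that $r_1$ has strictly higher priority than $r_2$, which is well-defined because the priority colors are totally ordered. Let $r \in \{r_1, r_2\}$ be the second Leader to move into $v$, and examine the Look phase of $r$'s last LCM cycle before that Move. Since $r$ did not trigger the Leader target change rule, $r$ must have observed $v$ as neither occupied nor blocked within its $2$-hop visibility; since the other Leader $r'$ ends up at $v$, $r$ sees $r'$ either at $v$ or at a neighbor of $v$, together with $r'$'s color.

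Then I would enumerate the colors $r'$ can display at that instant, using the fact that an active Leader's light is either off (only transiently, before its first action), in a DIR color (possibly the terminating color $\Delta$), or in its priority color. If $r'$ is already on $v$, then $r$'s Look sees $v$ occupied and the Leader target change rule fires, a contradiction. If $r'$ shows a non-$\Delta$ DIR, CONF, or MOV color, the modified blocking rule declares $v$ blocked regardless of priorities, again a contradiction. If $r'$ shows its priority color then, when $r = r_2$, the color $r'$ displays is that of $r_1$ which is strictly higher, so $v$ is blocked for $r_2$; when $r = r_1$, the color $r'$ displays is that of $r_2$ which is strictly lower and does not by itself block. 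I would handle this remaining asymmetric subcase by pushing the argument one LCM cycle back and looking at the Look in which $r_2$ set its priority color: $r_2$'s Look in that cycle also had to find $v$ unblocked, yet $r_1$ -- being itself a Leader subsequently moving to $v$ -- must by then already be adjacent to $v$ as an active Leader with a DIR or priority color, and that color would have blocked $v$ for $r_2$, producing a contradiction.

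The subtle step, and the main obstacle, is justifying the temporal backward step under ASYNC scheduling: I need to argue that along any interleaving of LCM cycles, the robot entering the conflict configuration later observes the other's blocking color in its immediately preceding Look. I would formalize this by identifying $r_2$'s last Look prior to setting the priority color and showing that $r_1$ cannot become a Leader adjacent to $v$ by a transition that is invisible to $r_2$ within its $2$-hop view, since such a transition requires $r_1$ to hold either a DIR color or its priority color on a neighbor of $v$ throughout the relevant interval. Once this temporal alignment is pinned down, the case analysis above closes the contradiction and no Leader-Leader collision can occur.
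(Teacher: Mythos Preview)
Your reduction to the Leader--Leader case is fine, but your handling of that case differs from the paper's and carries an unclosed gap. You pivot on the \emph{second} Leader to enter $v$ and analyze the other's color at that robot's last Look; in the asymmetric subcase (the higher-priority $r_1$ moves second and sees the lower-priority $r_2$ displaying its priority color) you push back to $r_2$'s last Look and assert that $r_1$ ``must by then already be adjacent to $v$''. Under ASYNC scheduling this is not forced: $r_2$ can take its final Look while $r_1$ is still at a vertex that is not a neighbor of $v$, after which $r_1$ arrives at a neighbor $u_1$ of $v$, sets a DIR color, receives CONF, and takes its own final Look---all while $r_2$ sits in an arbitrarily long Move phase, still visible at $u_2$ with its (non-blocking, lower) priority color. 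Your proposed justification that $r_1$ holds a blocking color ``on a neighbor of $v$ throughout the relevant interval'' fails precisely because that interval can begin before $r_1$ reaches any neighbor of $v$, so $r_2$'s Look legitimately sees $v$ as unblocked.

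The paper sidesteps this asymmetry altogether: it compares the first activation times $t_1,t_2$ of the two Leaders \emph{after arriving at their current positions} adjacent to $v$. The key observation is that a Leader arrives at its position showing its priority color and keeps it until its first activation there; hence at $\min(t_1,t_2)$ the activating robot sees the other one still displaying its priority color, and the priority comparison is immediate and symmetric---the lower-priority Leader is blocked outright, and if the higher-priority one activates first it next sets a DIR color that blocks the other. No backward-in-time argument is needed. If you want to repair your approach, the cleanest fix is to switch your pivot from ``second to move into $v$'' to ``first to activate at a neighbor of $v$'', which is exactly what the paper does.
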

\begin{proof}
Assume $v$ is an unvisited vertex that is neighboring to two Leaders $r_1$ and $r_2$, i.e. both $r_1$ and $r_2$ could move to $v$. %
Let $t_1$ (resp. $t_2$) be the first activation time of $r_1$ (resp. $r_2$) after it has arrived at its current position.

If $t_1 \neq t_2$, then one of them, which was activated before the other one, will block the other one, and they cannot collide.
If $t_1 = t_2$, then they will see each other's priority color.
Then they can decide which robot has a higher priority.
The robot with higher priority will block the other one.
Consequently, Leaders cannot collide with each other.

Now we show that the collision with a Follower is also not possible.
When the Follower $r$ would move to a vertex $v$ it has its CONF color set allowing the predecessor $r'$ to leave $v$.
This blocks $v$ for all Leaders. The predecessor $r'$ also blocks $v$ until $r$ occupies it. Therefore, $r$ cannot collide with a Leader.
\qed
\end{proof}
\begin{lemma} The BLOCK algorithm extended with the Priority protocol fills the connected graph. \label{lemma:m3_fills} \end{lemma}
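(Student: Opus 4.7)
The plan is to adapt the contradiction argument of Lemma~\ref{lemma:m2_fills} to the multi-Door setting. I assume the algorithm has terminated (so every robot is in Finished state) but some vertex $v$ remains unoccupied. Since the graph is connected and at least one robot has been placed, $v$ must have a neighbor occupied by a Finished robot, and I fix such a $v$.

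I would then revisit the three cases used in Lemma~\ref{lemma:m2_fills}: $v$ was never visited; $v$ was visited, became unoccupied and was never reoccupied; and $v$ remained blocked throughout the run. In each case I focus on the last moment $t$ at which some neighbor $r$ of $v$ changed its local state (moving away, arriving, turning its light off, or setting the $\Delta$ signal). At that moment every other neighbor of $v$ is either already Finished (light off) or displays a signal that, by the extended blocking rule, does not block $v$ from $r$'s point of view. The BLOCK rules, together with the modified rule for taking the leadership, then force $r$ to either move to $v$ or extend the chain toward $v$ rather than switch to Finished state, contradicting termination with $v$ unoccupied.

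The genuinely new case specific to $k$-Doors is when $v$ is simultaneously neighboring two or more active Leaders originating from different Doors, and for each of them $v$ is the only remaining non-blocked, non-occupied neighbor. Here I would rely on Lemma~\ref{lemma:m3_collision} and on the strict total order among priority colors: among these Leaders, the one with the highest priority sees no higher-priority priority color in its neighborhood, so by the modified blocking rule it does not consider $v$ blocked. The other Leaders near $v$ do consider $v$ blocked and would eventually set $\Delta$, but at that point the highest-priority Leader still has $v$ available and must move to $v$ rather than switch to Finished, again contradicting termination with $v$ unoccupied.

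The main obstacle will be a careful case analysis, at the critical moment $t$, of the possible colors shown by the other neighbors of $v$ (direction colors, CONF/CONF2, MOV, priority colors of various ranks, $\Delta$, and off) and, for each, checking whether the extended blocking rule forces $r$ to treat $v$ as blocked. Because the rule ignores $\Delta$ and ignores priority colors of lower rank than $r$'s own, and because any CONF/CONF2/MOV/direction color necessarily belongs to an ongoing chain that will keep propagating (so it cannot coexist with global termination), the case analysis should confirm that at time $t$ at least one neighbor of $v$ is in fact free to step onto $v$, which yields the contradiction and completes the proof.
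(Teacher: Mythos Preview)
Your core approach---assume termination with an unoccupied vertex $v$ adjacent to a Finished robot, and derive a contradiction by looking at the last neighbor of $v$ to switch to Finished---is exactly the paper's. The paper's entire proof is one line: ``We can repeat the arguments of the proof of Lemma~\ref{lemma:m2_fills}.'' The implicit point is that nothing new is needed in the multi-Door setting: if $t$ is the moment the \emph{last} neighbor $r$ of $v$ becomes Finished, then every other neighbor of $v$ already has its light off, so $v$ is unblocked for $r$ irrespective of any priority colors, and $r$ (necessarily a Leader at that instant) should have moved to $v$ instead of terminating.

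Consequently, your ``genuinely new case'' of several active Leaders simultaneously adjacent to $v$, and the color-by-color case analysis you anticipate as the main obstacle, are not actually needed. By the choice of $t$ there is no other robot adjacent to $v$ with its light on, so the only color present besides $r$'s own is ``off'' and the case analysis collapses. Your priority-based reasoning about which Leader gets to step onto $v$ is not wrong, but it treats a mid-run liveness scenario that the last-to-Finish framing already absorbs; the paper simply observes that Lemma~\ref{lemma:m2_fills} goes through verbatim.
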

\begin{proof}
We can repeat the arguments of the proof of Lemma~\ref{lemma:m2_fills}. 
\qed
\end{proof}
\begin{theorem} Algorithm BLOCK extended with the Priority protocol solves the $k$-Door Filling problem, $k\geq 2$, in the ASYNC model in finite time, with $2$ hops of visibility, $O(\log \Delta)$ bits of memory and using $\Delta + k + 4$ colors including the color when the light is off. \end{theorem}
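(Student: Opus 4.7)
The plan is to assemble the result from the two lemmas already proven for the multi-Door case, add a finite-termination argument, and then count resources. By Lemma~\ref{lemma:m3_collision} no two robots ever collide, and by Lemma~\ref{lemma:m3_fills} every vertex of the connected graph is eventually occupied, so the two remaining obligations are to rule out indefinite stalling in ASYNC and to verify the claimed memory and color bounds.

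The crux, and the main obstacle I anticipate, is ruling out deadlock among simultaneously active Leaders. With $k\ge 2$ Doors, several Leaders may sit around a common unvisited vertex $v$; if they all attempted to claim $v$ during the same activation they would mutually block one another, and the single-Door argument behind Theorem~\ref{thm:m2r} would break down. The priority protocol prevents this because $2$-hop visibility guarantees that every Leader that could move to $v$ sees each competing robot, while the strict total order on the $k$ priority colors designates a unique winner whenever several candidates raise a DIR color at the same instant. I would therefore argue inductively on the number of remaining unvisited vertices: at any time, the globally highest-priority active Leader is never blocked by a strictly lower-priority robot, so after finitely many of its own activations it either moves to an unvisited neighbour, invokes the \textit{Leader target change} rule, or triggers \textit{taking the leadership} and switches to Finished while its successor assumes the role. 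Each such step either strictly decreases the number of unvisited vertices or strictly decreases the number of non-Finished robots with no unvisited unblocked neighbour, so after finitely many activations of every robot the graph is completely filled, matching Lemma~\ref{lemma:m3_fills}.

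For the resource bounds, each robot stores exactly the same persistent fields as in Theorem~\ref{thm:m1}, namely $State$, $Target$, and $NextTarget$, which fit in $O(\log\Delta)$ bits together with the constant amount of information needed to remember the robot's own Door priority. The palette used by the extended algorithm consists of the $\Delta$ DIR colors, the two confirmation colors CONF and CONF2, one MOV color for Followers, the $k$ priority colors (one per Door, replacing MOV for Leaders while they move and signalling to the successor during \textit{taking the leadership}), and the off state, giving the claimed total of $\Delta+k+4$ colors including the off state.
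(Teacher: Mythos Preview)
Your proof is correct and follows the paper's approach exactly: invoke Lemma~\ref{lemma:m3_collision} for collision-freedom, Lemma~\ref{lemma:m3_fills} for complete coverage, and then count the persistent storage and the $\Delta + k + 4$ colors just as in Theorem~\ref{thm:m1}. The paper's own proof is in fact terser than yours and does not spell out the deadlock-avoidance argument you add; that elaboration is reasonable, though be aware that the highest-priority Leader can still be temporarily blocked by \emph{Followers} of other chains (showing DIR, CONF, or MOV), so a fully rigorous progress measure would also have to account for movement within those chains.
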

\begin{proof}
We can use the arguments of the proof of Theorem~\ref{thm:m1} as the area is filled (by Lemma~\ref{lemma:m3_fills}), and collisions are not possible (by Lemma~\ref{lemma:m3_collision}), the area will be filled without collisions.
The robots store the same data in their persistent storage as in Theorem~\ref{thm:m1} and use $\Delta + k + 4$ colors. 
\qed
\end{proof}

\section{Summary}
\label{sec:summary}

In this work, we have presented solutions for the Filling problem by luminous robots in the ASYNC$^{O(1)}$ model.
We have presented a method, called PACK, which solves the problem by robots with $1$ hop visibility range, $O(\log \Delta)$ bits of persistent memory, and $\Delta + 4$ colors for the single Door case, including the color when the light is off.
We have shown that this algorithm needs $O(n^2)$ asynchronous rounds. Regarding asynchronous algorithms for the Filling problem, former works only guarantee termination within finite time. Our analysis provides the first asymptotic upper bound on the running time in terms of asynchronous rounds.

Then we have shown how the number of colors can be reduced to $O(1)$ at the cost of running time. The algorithm with $1$ hop visibility range, $O(\log \Delta)$ bits of persistent memory, and $O(1)$ colors needs $O(n^2\log \Delta)$ rounds.

After this, we have shown how the running time can be significantly improved by robots with a visibility range of $2$ hops, with no communication, $O(\log \Delta)$ bits of persistent memory, and $\Delta + 4$ colors, by presenting the algorithm called BLOCK.
This algorithm needs $O(n)$ rounds.

Then we have extended the BLOCK algorithm for solving the $k$-Door Filling problem, $k\geq 2$, by using $O(\log \Delta)$ bits of memory, and $\Delta  + k + 4$ colors, including the color when the light is off.   
The visibility range of $2$ hops is optimal for the $k$-Door case (a counterexample when this problem cannot be solved in the ASYNC model with a visibility range of $1$ hop was presented in \cite{Barrameda2008}, also holds for the ASYNC$^{O(1)}$ model).

\begin{small}
\bibliography{bibliography.bib}
\end{small}
\section*{Appendix}
\appendix
\section{Algorithms PACK and BLOCK}

\begin{breakablealgorithm}
\caption{(\textbf{PACK}): Rules followed by robot $r$.}
\label{alg:pack_algorithm}
\begin{algorithmic}[1]
  \STATE If $r$.$State$ is Follower: \\
   \hspace{\algorithmicindent}If $r$.$NextTarget$ is not set: \\
   \hspace{2\algorithmicindent}If $r$.$Predecessor$ shows DIR color:  \\
   \hspace{ 3\algorithmicindent}Store shown DIR as $r$.$NextTarget$  \\
   \hspace{2\algorithmicindent}Else If $r$.$Predecessor$ shows DIR color $\Delta$:  \\
   \hspace{ 3\algorithmicindent}$r$ switches to Leader state  \\
   \hspace{\algorithmicindent}Else: \\   
   \hspace{2\algorithmicindent}If $r$.$Color$ is not set to CONF: \\
   \hspace{ 3\algorithmicindent}If $r$.$Entry$ is not set: \\
   \hspace{  4\algorithmicindent}Set $r$:$Color$ to CONF \\
   \hspace{ 3\algorithmicindent}Else If $r$.$Entry$ is occupied and $r$.$Successor$ has light set to a CONF color: \\
   \hspace{  4\algorithmicindent}Set $r$:$Color$ to CONF \\
   \hspace{2\algorithmicindent}Else If $r$.$Target$ is unoccupied: \\
   \hspace{ 3\algorithmicindent} $r$ sets $r$.$Color$ to MOV \\
   \hspace{ 3\algorithmicindent} $r$ moves to $r$.$Target$ and  \\
   \hspace{ 3\algorithmicindent} $r$ sets $r$.$Color$ to match $r$.$NextTarget$ \\
   \hspace{ 3\algorithmicindent} $r$ sets $r$.$Target$ to $r$.$NextTarget$ \\
   \hspace{ 3\algorithmicindent} $r$ clears $r$.$NextTarget$ \\
   \hspace{2\algorithmicindent}Else If $r$.$Predecessor$ shows DIR color $\Delta$:  \\
   \hspace{ 3\algorithmicindent}$r$ switches to Leader state  \\
   \hspace{2\algorithmicindent}Else If $r$.$NextTarget$ has been set: \\
   \hspace{ 3\algorithmicindent}If $r$.$Predecessor$ shows different direction:\\
   \hspace{  4\algorithmicindent}$r$ sets $r$.$NextTarget$ to new shown direction\\
   \hspace{  4\algorithmicindent}$r$ sets light to CONF2 color

  \STATE If $r$.$State$ is Leader: \\
   \hspace{\algorithmicindent}If $r$.$Target$ is not set: \\
   \hspace{2\algorithmicindent}$r$ sets $r$.$Target$ to first empty neighbor \\
   \hspace{2\algorithmicindent}If $r$.$Target$ is not set (no empty neighbor found): \\
   \hspace{ 3\algorithmicindent}$r$ sets DIR color $\Delta$ and becomes Finished \\
   \hspace{\algorithmicindent}Else: \\
   \hspace{2\algorithmicindent}If $r$ is waiting for CONF: \\
   \hspace{ 3\algorithmicindent}If $r$.$Successor$ has light set to a CONF color: \\
   \hspace{  4\algorithmicindent}If $r$.$Target$ is unoccupied: \\
   \hspace{   5\algorithmicindent}$r$ sets $r$.$Color$ to MOV \\
   \hspace{   5\algorithmicindent}$r$ moves to $r$.$Target$ \\
   \hspace{   5\algorithmicindent}$r$ clears $r$.$Target$ \\
   \hspace{  4\algorithmicindent}Else If $r$ has unoccupied neighbor $v$:\\
   \hspace{   5\algorithmicindent}$r$ sets $r$.$Target$ and set DIR color to match the direction of $v$ \\
   \hspace{   5\algorithmicindent}$r$ is now waiting for CONF2 \\
   \hspace{  4\algorithmicindent}Else: \\
   \hspace{   5\algorithmicindent}$r$ sets DIR color $\Delta$ and becomes Finished \\
   \hspace{ 3\algorithmicindent}Else: \\
   \hspace{  4\algorithmicindent}Waits for $r$.$Successors$ to set CONF color \\   
   \hspace{2\algorithmicindent}Else If $r$ is waiting for CONF2: \\
   \hspace{ 3\algorithmicindent}If $r$.$Successor$ has light set to a CONF2 color: \\
   \hspace{  4\algorithmicindent}$r$ sets $r$.$Color$ to MOV \\
   \hspace{  4\algorithmicindent}$r$ moves to $r$.$Target$ \\
   \hspace{  4\algorithmicindent}$r$ clears $r$.$Target$ \\
   \hspace{ 3\algorithmicindent}Else: \\
   \hspace{  4\algorithmicindent}Waits for $r$.$Successors$ to set CONF color \\   
   
  \STATE If $r$.$State$ is None: \\
   \hspace{\algorithmicindent}$r$ sets $r$.$Target$ to neighbor \\
   \hspace{\algorithmicindent}If $r.Target$ does not contain robot:\\
   \hspace{2\algorithmicindent}$r$ becomes the Leader\\
   \hspace{2\algorithmicindent}$r$ sets $r$.$Color$ to MOV \\
   \hspace{2\algorithmicindent}$r$ moves to $r$.$Target$\\
   \hspace{2\algorithmicindent}$r$ clears $r$.$Target$\\
   \hspace{\algorithmicindent}Else:\\
   \hspace{2\algorithmicindent}$r$ becomes a Follower\\
  
 \end{algorithmic}
\end{breakablealgorithm}

\begin{breakablealgorithm}
  \caption{(\textbf{BLOCK}): Rules followed by robot $r$.}
  \label{alg:block_algorithm}
  \begin{algorithmic}[1]
    \STATE If $r$.$State$ is Follower: \\
     \hspace{\algorithmicindent}If $r$.$NextTarget$ is not set: \\
     \hspace{2\algorithmicindent}If $r$.$Predecessor$ shows DIR color:  \\
     \hspace{ 3\algorithmicindent}Store shown DIR as $r$.$NextTarget$  \\
     \hspace{ 3\algorithmicindent}Set $r$.$Color$ to CONF  \\
     \hspace{2\algorithmicindent}If $r$.$Predecessor$ shows DIR color $\Delta$:  \\
     \hspace{ 3\algorithmicindent}$r$ switches to Leader state  \\
     \hspace{\algorithmicindent}Else: \\   
     \hspace{2\algorithmicindent}If $r$.$Color$ is set to CONF or CONF2: \\     
     \hspace{ 3\algorithmicindent}If $r$.$Target$ is unoccupied: \\
     \hspace{  4\algorithmicindent}Set $r$.$Color$ to DIR to match $r$.$Target$ \\
     \hspace{ 3\algorithmicindent}Else: \\
     \hspace{  4\algorithmicindent}If $r$.$Predecessor$ shows DIR color $\Delta$:  \\
     \hspace{   5\algorithmicindent}$r$ switches to Leader state  \\
     \hspace{ 3\algorithmicindent}Else If $r$.$NextTarget$ has been set: \\
     \hspace{  4\algorithmicindent}If $r$.$Predecessor$ shows different direction:\\
     \hspace{   5\algorithmicindent}$r$ sets $r$.$NextTarget$ to new shown direction\\
     \hspace{   5\algorithmicindent}$r$ sets light to CONF2 color\\
     \hspace{2\algorithmicindent}Else: \\
     \hspace{ 3\algorithmicindent}If $r$.$Entry$ is not set or $r$.$Entry$ is occupied \\
     \hspace{ 3\algorithmicindent}and $r$.$Successor$ has light set to a CONF color: \\
     \hspace{  4\algorithmicindent}Else If $r$.$Target$ is unoccupied: \\
     \hspace{   5\algorithmicindent}$r$ sets $r$.$Color$ to MOV \\
     \hspace{   5\algorithmicindent}$r$ moves to $r$.$Target$ \\
     \hspace{   5\algorithmicindent}sets light to match $r$.$NextTarget$ \\
     \hspace{   5\algorithmicindent}$r$ sets $r$.$Target$ to $r$.$NextTarget$ \\     
     \hspace{  4\algorithmicindent}Else If $r$.$Predecessor$ shows DIR color $\Delta$:  \\
     \hspace{   5\algorithmicindent}$r$ switches to Leader state \\
     \hspace{   5\algorithmicindent}$r$ is now waiting for CONF2 \\
  
    \STATE If $r$.$State$ is Leader: \\
     \hspace{\algorithmicindent}If $r$.$Target$ is not set: \\
     \hspace{2\algorithmicindent}$r$ sets $r$.$Target$ to first empty and not blocked neighbor \\
     \hspace{2\algorithmicindent}If $r$.$Target$ is not set (no empty neighbor found): \\
     \hspace{ 3\algorithmicindent} $r$ sets DIR color $\Delta$ and becomes Finished \\
     \hspace{\algorithmicindent}Else: \\
     \hspace{2\algorithmicindent}If $r$ is waiting for CONF: \\
     \hspace{ 3\algorithmicindent}If $r$.$Entry$ is not set: \\
     \hspace{  4\algorithmicindent}$r$ sets $r$.$Target$ to first empty and not blocked neighbor \\
     \hspace{  4\algorithmicindent}If $r$.$Target$ is not set (no empty neighbor found): \\
     \hspace{   5\algorithmicindent}$r$ sets DIR color $\Delta$ and becomes Finished \\
     \hspace{  4\algorithmicindent}Else: \\
     \hspace{   5\algorithmicindent}$r$ sets $r$.$Color$ to MOV \\
     \hspace{   5\algorithmicindent}$r$ moves to $r$.$Target$ \\
     \hspace{   5\algorithmicindent}$r$ clears $r$.$Target$ \\
     \hspace{ 3\algorithmicindent}If $r$.$Entry$ is occupied  \\
     \hspace{  4\algorithmicindent}If $r$.$Successor$ has light set to a CONF or CONF2 color: \\
     \hspace{   5\algorithmicindent}If $r$.$Target$ is occupied: \\
     \hspace{    6\algorithmicindent}$r$ sets $r$.$Target$ to first empty and not blocked neighbor \\
     \hspace{    6\algorithmicindent}If $r$.$Target$ is not set (no empty neighbor found): \\
     \hspace{    6\algorithmicindent} $r$ sets DIR color $\Delta$ and becomes Finished \\
     \hspace{   5\algorithmicindent}Else: \\
     \hspace{    6\algorithmicindent}$r$ sets $r$.$Color$ to MOV \\
     \hspace{    6\algorithmicindent}$r$ moves to $r$.$Target$ \\
     \hspace{    6\algorithmicindent}$r$ clears $r$.$Target$ \\
     \hspace{  4\algorithmicindent}Else: \\
     \hspace{   5\algorithmicindent}$r$ sets $r$.$Target$ to first empty and not blocked neighbor \\
     \hspace{   5\algorithmicindent}If $r$.$Target$ is not set (no empty neighbor found): \\
     \hspace{   5\algorithmicindent} $r$ sets DIR color $\Delta$ and becomes Finished \\
     \hspace{2\algorithmicindent}Else If $r$ is waiting for CONF2: \\
     \hspace{ 3\algorithmicindent}If $r$.$Successor$ has light set to a CONF2 color: \\
     \hspace{  4\algorithmicindent}$r$ sets $r$.$Color$ to MOV \\
     \hspace{  4\algorithmicindent}$r$ moves to $r$.$Target$ \\
     \hspace{  4\algorithmicindent}$r$ clears $r$.$Target$ \\
     \hspace{ 3\algorithmicindent}Else: \\
     \hspace{  4\algorithmicindent}Waits for $r$.$Successors$ to set CONF2 color \\   
          
    \STATE If $r$.$State$ is None: \\
     \hspace{\algorithmicindent}$r$ sets $r$.$Target$ to neighbor \\
     \hspace{\algorithmicindent}If $r.Target$ does not contain robot:\\
     \hspace{2\algorithmicindent}$r$ becomes the Leader\\
     \hspace{2\algorithmicindent}$r$ sets light to MOV\\
     \hspace{2\algorithmicindent}$r$ moves to $r$.$Target$\\
     \hspace{2\algorithmicindent}$r$ clears $r$.$Target$\\
     \hspace{\algorithmicindent}Else:\\
     \hspace{2\algorithmicindent}$r$ becomes a Follower\\    
   \end{algorithmic}
  \end{breakablealgorithm}

\section{Simulation Results}

We have implemented our algorithms and conducted simulations on different graph topologies.  
The tested topologies are Line graphs, Stars, and Delaunay triangulations with vertices uniformly randomly distributed in a square area.  
For the runtime, we assumed the FSYNC model (i.e. all robots are active in every LCM-cycle), therefore the runtimes are better comparable. 

\subsection{Single Door}
\subsubsection{Line graph}

A line graph consist of $n\geq 1$ vertices $V=\{v_1,...,v_n\}$ and edges $E=\{(v_i,v_{i+1}): 1\leq i< n\}$. The Door vertex is at $v_1$ in the end of the line. 
In this case there are no branching vertices, and the robots move on a unique path. 

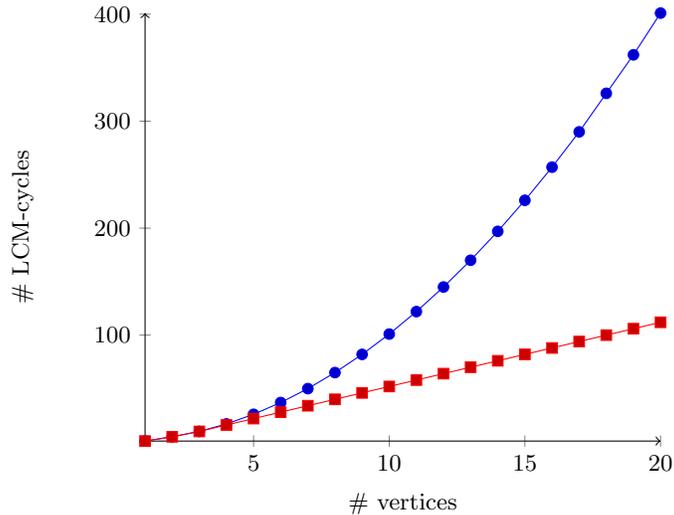
\begin{figure}[!ht]
  \centering  
  \begin{tikzpicture}
      \begin{axis}[axis lines=middle,
        axis line style={->},
        x label style={at={(axis description cs:0.5,-0.1)},anchor=north},
        y label style={at={(axis description cs:-0.2,.5)},rotate=90,anchor=south},
        xlabel={\# vertices},  
        ylabel={\# LCM-cycles}]  
      \addplot table [x=a, y=b, col sep=comma] {line_PACK.csv};
      \addplot table [x=a, y=b, col sep=comma] {line_BLOCK.csv};
      \end{axis} 
  \end{tikzpicture}
  \caption{Simulation results for Line graphs. The $x$-axis represents the number of nodes, $y$-axis represents the number of required LCM-cycles to finish the filling. The runtime of the PACK algorithm is displayed with blue color, and of the BLOCK algorithm with red.} 
\label{fig:line_simulation}
\end{figure}

The line graph exhibits a worst-case input for the PACK algorithm, since one Leader traverse the whole line, and between two consecutive steps of the Leader all robots must form a Packed chain. This results in quadratic running time. 
The BLOCK algorithm runs in linear time, which is also confirmed by the simulations. (Figure \ref{fig:line_simulation}).

\subsubsection{Star graph}

A star graph of $n\geq 1$ consists of one central vertex $v_1$, which is connected to all other vertices $\{v_2,...,v_n\}$ by an edge. All vertices in $\{v_2,...,v_n\}$ are only connected to $v_1$ by an edge. The Door is placed at one of the degree 1 vertices. 
In this topology, the Leader first moves to the central vertex, then to one of the degree 1 nodes, and becomes Finished. The leadership is taken by its Follower occupying $v_1$. Then the new Leader moves to one of the leaves, and the leadership is taken by its Follower occupying $v_1$, etc... In this case, the lenght of the chain behind the current Leader is at most 2, and the Packed state is achieved in a constant number of LCM-cycles. Therefore, the PACK algorithm runs in linear time on the star. 

\begin{figure}[!ht]
  \centering  
  \begin{tikzpicture}
      \begin{axis}[axis lines=middle,
        axis line style={->},
        x label style={at={(axis description cs:0.5,-0.1)},anchor=north},
        y label style={at={(axis description cs:-0.2,.5)},rotate=90,anchor=south},
        xlabel={\# vertices},  
        ylabel={\# LCM-cycles}]  
      \addplot table [x=a, y=b, col sep=comma] {star_PACK.csv};
      \addplot table [x=a, y=b, col sep=comma] {star_BLOCK.csv};
      \end{axis} 
  \end{tikzpicture}
  \caption{Simulation results for Star graphs. The $x$-axis represents the number of nodes, the $y$-axis represents the number of required LCM-cycles to finish the filling. Blue shows the runtime of the PACK algorithm; red shows the runtime of the BLOCK algorithm.} 
\label{fig:star_simulation}
\end{figure}
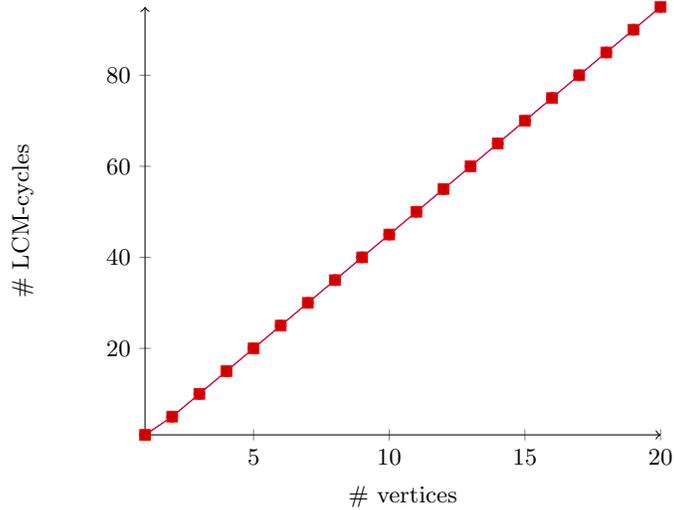

The results can be seen in \figurename~\ref{fig:star_simulation}, which shows that the runtime of the PACK and the BLOCK algorithm is exactly the same in both cases, both runtimes are linear in the number of vertices. 

\subsubsection{Random Delaunay triangulation} 

The graphs are generated by using the following method. 
$i$) In a square area we select $n$ points independently, uniformly at random, where $n$ is the size of the graph. 
$ii$) Using the first $n-1$ points we compute a Delaunay triangulation. 
Then we add $n$-th point as Door vertex as an auxiliary vertex to the closest random vertex.

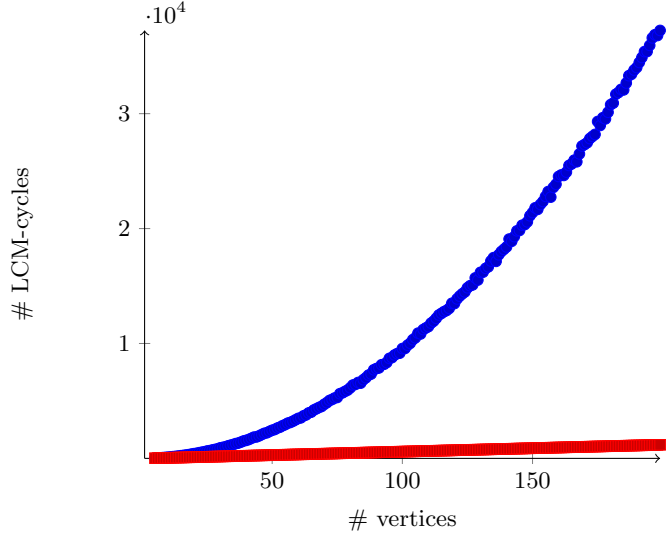
\begin{figure}[!ht]
  \centering  
  \begin{tikzpicture}
      \begin{axis}[axis lines=middle,
        axis line style={->},
        x label style={at={(axis description cs:0.5,-0.1)},anchor=north},
        y label style={at={(axis description cs:-0.2,.5)},rotate=90,anchor=south},
        xmin=1,
        ymin=1,
        xlabel={\# vertices},  
        ylabel={\# LCM-cycles}]  
      \addplot table [x=a, y=b, col sep=comma] {rand_PACK.csv};
      \addplot table [x=a, y=b, col sep=comma] {rand_BLOCK.csv};      
      \end{axis}
  \end{tikzpicture}
  \caption{Simulation results for Delaunay triangulations. %
  The $x$-axis represents the number of vertices, the $y$-axis represents the number of required LCM-cycles to finish the filling. The red curve shows the runtime of the BLOCK algorithm; the blue curve shows the runtime of the PACK algorithm.
  }  
  \label{fig:simulation_del}  
\end{figure}

For this simulation, we generated $50$ random Delaunay graphs using the described method for each vertex set size, $n=3,\dots, 200$. Then, for each input graph, we measured the number of LCM-cycles performed by both the PACK and the BLOCK algorithms. 
Then we computed the average runtimes of the 50 runs of both algorithms for each input size, $n=3,\dots, 200$. 
Figure \ref{fig:simulation_del} shows the simulation results.
The simulations are backing up the linear runtime for the BLOCK algorithm. 
For the PACK algorithm, the simulations suggest quadratic runtime.

\subsection{Multiple Doors}

For this simulation, for $n=1000$ vertices and $k=1,\dots, 200$ Doors, we generate 50 random Delaunay graphs as follows: 
$i$) In a square area, we select $n$ points independently, uniformly at random, where $n$ is the size of the graph.  
$ii$) Using the first $n-k$ points, we compute a Delaunay triangulation.  
Then we add the remaining $k$ points as Door vertices and join each of them with the closest Delaunay vertex. 
The purpose of this simulation was to test the speed-up of the algorithm in case there are multiple Doors. 
For each $k=1,\dots, 200$, we plotted the average runtime on the 50 randomly generated Delaunay triangulations. 
The simulation results in Figure \ref{fig:simulation_md} indicate that runtime of the $k$-Door BLOCK algorithm is proportional to $n/k$ for this simulation setting. 

\begin{figure}[!ht]
    \centering  
    \begin{tikzpicture}
        \begin{axis}[axis lines=middle,
          axis line style={->},
          x label style={at={(axis description cs:0.5,-0.1)},anchor=north},
          y label style={at={(axis description cs:-0.2,.5)},rotate=90,anchor=south},
          ymode=log,
          xmode=log,
          ymin=1,
          xlabel={\# Doors},  
          ylabel={\# LCM-cycles}]  
        \addplot table [x=a, y=b, col sep=comma] {multi_BLOCK.csv};
        \end{axis}
    \end{tikzpicture}
    \caption{Multiple Doors BLOCK algorithm on random Delaunay triangulations with $n=1000$ vertices. The $x$-axis represents the number of Doors; the $y$-axis represents the number of required LCM-cycles to finish the filling. Both axes are log-scaled.} 
    \label{fig:simulation_md} 
  \end{figure}
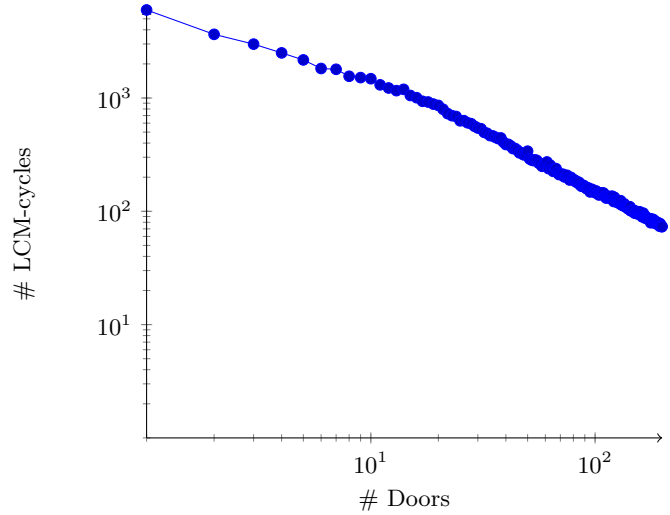

\end{document}